\NeedsTeXFormat{LaTeX2e}
\pdfoutput=1
\documentclass[12pt]{scrartcl}

\usepackage[scaled=.90]{helvet}
\usepackage{mathptmx} 
\usepackage{courier}
\usepackage[ngerman,english]{babel}
\usepackage{dsfont}
\usepackage{amsmath}
\usepackage{amssymb}

\usepackage{enumerate, amssymb, amsmath}
\newcommand{\N}{\ensuremath{{\mathbb N}}}

\usepackage{tikz} 
\usepackage[T1]{fontenc}
\usepackage{lmodern}
\usepackage[figurewithin=section]{caption}

\usepackage{amssymb, amsmath }
\usepackage{amsthm}

\newtheorem{theorem}{Theorem}[section]

\newtheorem{lemma}[theorem]{Lemma} 
\newtheorem{proposition}[theorem]{Proposition}

\newtheorem{remark}[theorem]{Remark}

 \newcommand{\un}[1]{{\underline #1}}
 \newcommand{\ov}[1]{{\overline #1}}
\newcommand{\R}{\ensuremath{{\mathbb R}}}
\newcommand{\E}{\ensuremath{{\mathbb E}}}

\DeclareMathOperator*{\argmax}{argmax}

\DeclareMathOperator*{\sgn}{sgn}

\begin{document}

\title{Optimal decision under ambiguity for diffusion processes}
\author{S\"oren Christensen\thanks{Christian-Albrechts-Universit\"at, Mathematisches Seminar, Kiel, Germany, email: {christensen}@math.uni-kiel.de.}}
\date{\today}
\maketitle

%
\begin{abstract}
In this paper we consider stochastic optimization problems for an ambiguity averse decision maker who is uncertain about the parameters of the underlying process. In a first part we consider problems of optimal stopping under drift ambiguity for one-dimensional diffusion processes. Analogously to the case of ordinary optimal stopping problems for one-dimensional Brownian motions we reduce the problem to the geometric problem of finding the smallest majorant of the reward function in a two-parameter function space. In a second part we solve optimal stopping problems when the underlying process may crash down. These problems are reduced to one optimal stopping problem and one Dynkin game. Examples are discussed.
\end{abstract}

\textbf{Keywords:} optimal stopping; drift ambiguity; crash-scenario; Dynkin-games; diffusion processes\vspace{.8cm}

\textbf{Subject Classifications:} 60G40; 62L15
\cite{PS}

\section{Introduction}
In most articles dealing with stochastic optimization problems one major assumption is that the decision maker has full knowledge of the parameter of the underlying stochastic process. This does not seem to be a realistic assumption in many real world situations. Therefore, different multiple prior models were studied in the economic literature in the last years. Here, we want to mention \cite{DE} and \cite{ES}, and refer to \cite{CR} for an economic discussion and further references. 

In this setting it is assumed that the decision maker deals with the uncertainty via a worst-case approach, that is, she optimizes her reward under the assumption that the ``market'' chooses the worst possible prior. This is a natural assumption, and we also want to pursue this approach.

A very important class of stochastic optimization problems is given by optimal stopping problems. These problems arise in many different fields, e.g., in pricing American-style options, in portfolio optimization, and in sequential statistics. Discrete time problems of optimal stopping in a multiple prior setting were first discussed in \cite{R} and analogous results to the classical ones were proved. In this setting a generalization of the classical best choice problem was treated in detail in \cite{CR2}. In continuous time the case of an underlying diffusion with uncertainty about the drift is of special interest. The general theory (including adjusted Hamilton-Jacobi-Bellman equations) is developed in \cite{CR}. Some explicit examples are given there, but no systematic way for finding an analytical solution is described. In \cite{A07} the case of monotonic reward functions for one-dimensional diffusion processes is considered. The restriction to monotonic reward functions simplifies the problem since only two different worst-case measures can arise. 

Another class of stochastic optimization problems under uncertainty was dealt with in a series of papers starting with \cite{KW}: Portfolio optimization problems are considered under the assumption that the underlying asset price process may crash down at a certain (unknown) time point. The decision maker is again considered to be ambiguity averse in the sense that she tries to choose the best possible stopping policy out of the worst possible realizations of the crash date. See \cite{KS} for an overview on existing results.

The aim of this article is to treat optimal stopping problems under uncertainty for underlying one-dimensional diffusion processes. These kinds of problems are of special interest since they arise in many situations and often allow for an explicit solution. 

The structure of this article is as follows:
In Section \ref{sec:drift} we first review some well-known facts about the solution of ordinary optimal stopping problems for an underlying Brownian motion. These problems can be solved graphically by characterizing the value function as the smallest concave majorant of the reward function. Then we treat the optimal stopping problem under ambiguity about the drift in a similar way: The result is that the value function can be characterized as the smallest majorant of the reward function in a two-parameter class of functions. The main tool is the use of generalized $r$-harmonic functions. After giving an example and characterizing the worst-case measure, we generalize the results to general one-dimensional diffusion processes. 

In Section \ref{sec:crashes} we introduce the optimal stopping problem under ambiguity about crashes of the underlying process in the spirit of \cite{KS}. In this situation the optimal strategy can be described by two easy strategies: One pre-crash and one post-crash strategy. These strategies can be found as solutions of a one-dimensional Dynkin game and an ordinary optimal stopping problem, which can both be solved using standard methods. We want to point out that this model is a natural situation where Dynkin games arise and the theory developed in the last years can be used fruitfully.
As an explicit example we study the valuation of American call-options in the model with crashes. Here, the post-crash strategy is the well-known threshold-strategy in the standard Black-Scholes setting. The pre-crash strategy is of the same type, but the optimal threshold is lower. 

\section{Optimal stopping under drift ambiguity}\label{sec:drift}

\subsection{Graphical solution of ordinary optimal stopping problems}\label{sec:graph_sol}
Problems of optimal stopping in continuous time are well-studied and the general theory is well-developed. Nonetheless, the explicit solution to such problems is often hard to find and the class of explicit examples is very limited. Most of them are generalizations of the following situation, that allows for an easy geometric solution:\\
Let $(W_t)_{t\geq 0}$ be a standard Brownian motion on a compact interval $[a,b]$ with absorbing boundary points $a$ and $b$. We consider the problem of optimal stopping given by the value function
\[v(x)=\sup_{\tau}\E_x(g(W_\tau)\mathds{1}_{\{\tau<\infty\}}),\mbox{~~~~$x\in[a,b]$,}\]
where the reward function $g:[a,b]\rightarrow[0,\infty)$ is continuous and the supremum is taken over all stopping times w.r.t. the natural filtration for $(W_t)_{t\geq0}$. Here and in the following, $\E_x$ denotes taking expectation for the process conditioned to start in $x$. In this case it is well-known that the value function $v$ can be characterized as the smallest concave majorant of $g$, see \cite{DyY}. This means that the problem of optimal stopping can be reduced to finding the smallest majorant of $g$ in an easy class of functions. For finding the smallest concave majorant of a function $g$ one only has to consider affine functions, i.e., for each fixed point $x\in[a,b]$ the value of the smallest concave majorant is given by
\[\inf\{h_{c,d}(x):c,d\in\R, h_{c,d}\geq g\},\]
where $h_{c,d}$ is an element of the two-parameter class of affine functions of the form $h_{c,d}(y)=cy+d$. This problem can be solved geometrically, see Figure \ref{fig:conc}. We want to remark that this problem is indeed a semi-infinite linear programming problem:
\begin{align*}
\min ! ~~~~~&cx+d\\
\mbox{s.t~~~~~}&cy+d\geq g(y)\mbox{~~~for all $y\in [a,b].$}
\end{align*}
This gives rise to an efficient method for solving these problems, which can be generalized in an appropriate way, see \cite{HS} for an analytical method and \cite{C} for a numerical point of view.
\begin{figure}[ht]
\begin{center}
\includegraphics[width=7cm]{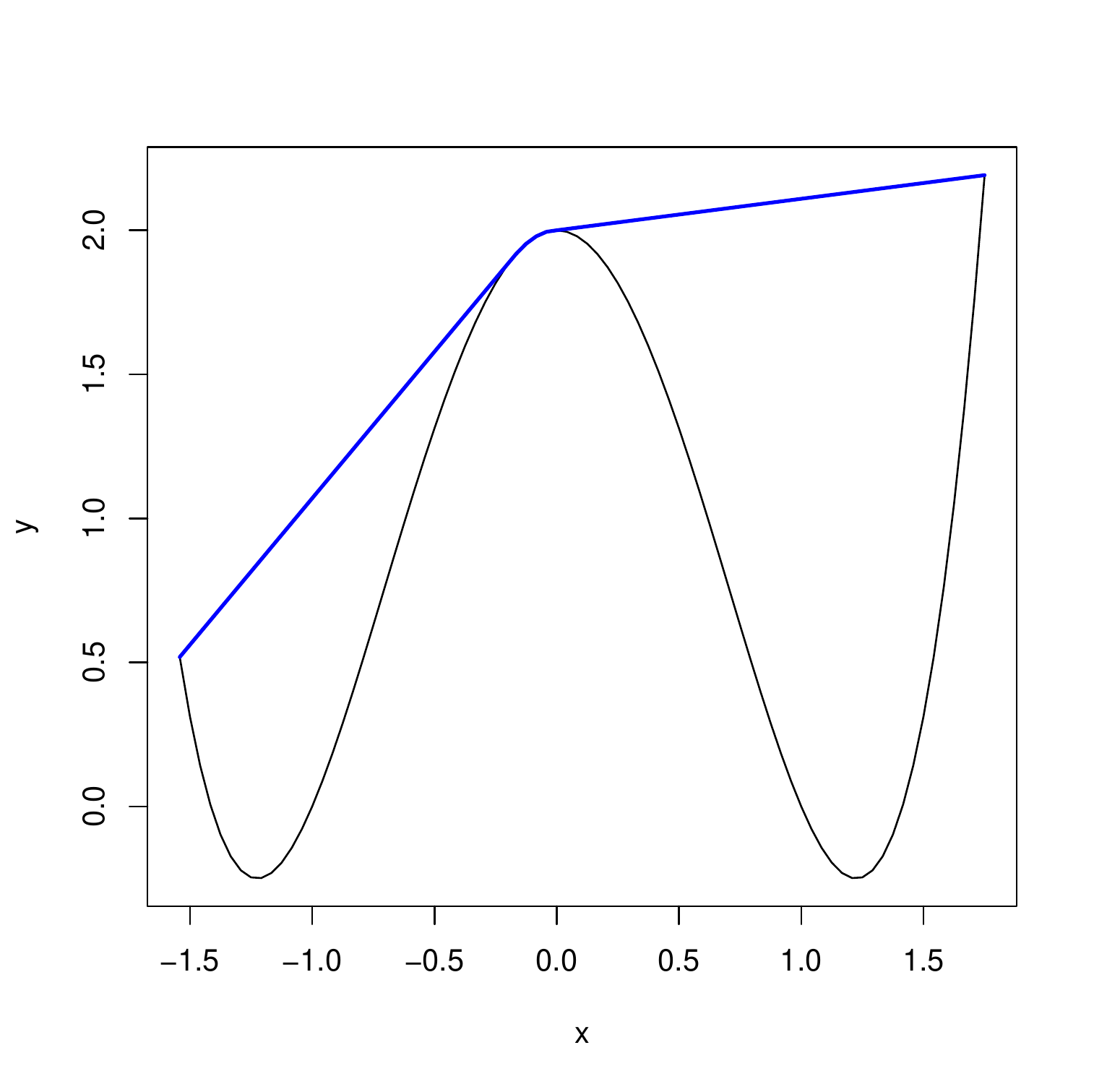}
\caption{Graph of a function $g$ (black) and its smallest concave majorant (blue)}\label{fig:conc}
\end{center}
\end{figure}
The example described above is important both for theory and applications of optimal stopping since by studying it one can obtain an intuition for more complex situations such as finite time horizon problems and multidimensional driving processes, where numerical methods have to be used in most situations of interest.\\


The goal of this section is to handle optimal stopping problems with drift ambiguity for diffusion processes similarly to the ordinary case discussed above. This gives rise to an easy to handle geometric method for solving optimal stopping problems under drift ambiguity explicitly. 

\subsection{Special Case: Brownian motion}\label{subsec:BM}
In the following we use the notation of \cite{CR}:
Let $(X_t)_{t\geq0}$  be a Brownian motion under the measure $Q$, fix $\kappa\geq0$ and denote by $\mathcal{P}^\kappa$ the set of all probability measures, that are equivalent to $Q$ with density process  of the form
\[\exp\left(\int_0^t\theta_sdX_s-1/2\int_0^t\theta^2_sds\right)\]
for a progressively measurable process $(\theta_t)_{t\geq0}$ with $|\theta_t|\leq \kappa$ for all $t\geq0$. We want to find the value function
\[v(x)=\sup_\tau\inf_{P\in\mathcal{P}^\kappa}\E_x^P(e^{-r\tau}g(X_\tau)\mathds{1}_{\{\tau<\infty\}})\]
for some fixed discounting rate $r>0$ and a measurable reward function $g:\R\rightarrow[0,\infty)$, where $\E_x^P$ means taking expectation under the measure $P$ when the process is started in $x$. Instead of taking affine functions as in Subsection \ref{sec:graph_sol} we construct another class of appropriate functions based on the minimal $r$-harmonic functions (introduced below) for the Brownian motion with drift $-\kappa$ resp. $\kappa$ as follows:\\
Denote the roots of the equation
\[1/2z^2-\kappa z-r=0\]
by $\alpha_1<0<\alpha_2$ and the roots of
\[1/2z^2+\kappa z-r=0\]
by $\beta_1<0<\beta_2$. Then $e^{\alpha_ix},i=1,2,$ are the minimal $r$-harmonic functions for a Brownian motion with drift $-\kappa$, and $e^{\beta_ix},i=1,2,$ the corresponding functions for a Brownian motion with drift $\kappa$. Note that $\beta_1\leq\alpha_1\leq0\leq\beta_2\leq\alpha_2$ and $\beta_1=-\alpha_2$ and $\beta_2=-\alpha_1$. For all $c\in \R$ define the functions $h_c:\R\rightarrow[0,\infty)$ via
\[h_c(x)=\begin{cases}
  \frac{\alpha_2}{\alpha_2-\alpha_1}e^{\alpha_1(x-c)}-\frac{\alpha_1}{\alpha_2-\alpha_1}e^{\alpha_2(x-c)},  & \text{if}~~x> c\\
  \frac{\beta_2}{\beta_2-\beta_1}e^{\beta_1(x-c)}-\frac{\beta_1}{\beta_2-\beta_1}e^{\beta_2(x-c)},  & \text{if}~~x\leq c,
\end{cases}\]
and
\[h_\infty(x)=e^{\beta_1x},~~~h_{-\infty}(x)=e^{\alpha_2x}.\]
For $c\in\R$, the function $h_c$ is constructed by smoothly merging $r$-harmonic functions for the Brownian motion with drift $\kappa$ (for $x\leq c$) and $-\kappa$ (for $x> c$) at their minimum in $c$. By taking derivatives and taking into account that $\beta_1=-\alpha_2$ and $\beta_2=-\alpha_1$, one sees that the function $h_c$ is indeed $C^2$. \\
The set $\{\lambda h_c:c\in[-\infty,\infty],\lambda\geq 0\}$ does not form a convex cone for $\kappa>0$. This is the main difference compared to the case without drift ambiguity. Therefore, the  standard techniques for optimal stopping are not applicable immediately. Nonetheless, this leads to the right $\mathcal{P}^\kappa$-supermartingales to work with:

\begin{lemma}\label{lem:harmonic}
\begin{enumerate}[(i)]
\item For all $a,b,x\in\R$ with $a\leq x\leq b$, $c\in[-\infty,\infty]$, $P\in\mathcal{P}^\kappa$ and $\tau=\inf\{t\geq 0:X_t\not\in[a,b]\}$ it holds that 
\[\E_x^{P}(e^{-r\tau}h_c(X_\tau)\mathds{1}_{\{\tau<\infty\}})\geq h_c(x)\mbox{~~and~~}\E_x^{P_c}(e^{-r\tau}h_c(X_\tau)\mathds{1}_{\{\tau<\infty\}})=h_c(x),\]
where the measure $P_c$ is such that 
\[dX_t=-\kappa \sgn(X_t-c)dt+dW^c_t\]
for a Brownian motion $W^c$ under $P_c$.
\item For all $c\in[-\infty,\infty]$ and all stopping times $\tau$ it holds that 
\[\E_x^{P_c}(e^{-r\tau}h_c(X_\tau)\mathds{1}_{\{\tau<\infty\}})\leq h_c(x).\]
\item For all $a,b,x\in\R$ with $a<x<b$, $P\in\mathcal{P}^\kappa$, and $\tau_a=\inf\{t\geq 0:X_t=a\}$, $\tau_b=\inf\{t\geq 0:X_t=b\}$ it holds that 
\begin{align*}
&\E_x^{P}(e^{-r\tau_a}h_\infty(X_{\tau_a})\mathds{1}_{\{\tau_a<\infty\}})\geq h_\infty(x),\\
&\E_x^{P_\infty}(e^{-r\tau_a}h_\infty(X_{\tau_a})\mathds{1}_{\{\tau_a<\infty\}})=h_\infty(x),
\end{align*}
and
\begin{align*}
&\E_x^{P}(e^{-r\tau_b}h_{-\infty}(X_{\tau_b})\mathds{1}_{\{\tau_b<\infty\}})\geq h_{-\infty}(x),\\
&\E_x^{P_{-\infty}}(e^{-r\tau_b}h_{-\infty}(X_{\tau_b})\mathds{1}_{\{\tau_b<\infty\}})=h_{-\infty}(x).
\end{align*}
\end{enumerate}
\end{lemma}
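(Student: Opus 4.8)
The plan is to exhibit $h_c$ as a function for which the discounted process is a submartingale under every $P\in\mathcal P^\kappa$ and a martingale under the candidate worst-case measure $P_c$, and then to read off (i)--(iii) from optional stopping. First I would fix $P\in\mathcal P^\kappa$ with density driven by $(\theta_t)$, $|\theta_t|\le\kappa$, and invoke Girsanov's theorem to write $dX_t=\theta_t\,dt+d\tilde W_t$ for a $P$-Brownian motion $\tilde W$. Writing $\mathcal L^\theta f=\tfrac12 f''+\theta f'-rf$ for the discounted generator, Itô's formula applied to $M_t:=e^{-rt}h_c(X_t)$ (legitimate since $h_c$ is $C^2$) gives
\[dM_t=e^{-rt}(\mathcal L^{\theta_t}h_c)(X_t)\,dt+e^{-rt}h_c'(X_t)\,d\tilde W_t,\]
so that everything hinges on controlling the sign of $(\mathcal L^\theta h_c)(x)$.

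The computational heart is the pointwise estimate $(\mathcal L^\theta h_c)(x)\ge0$ for every $|\theta|\le\kappa$, with equality exactly when $\theta=-\kappa\,\sgn(x-c)$. For $x>c$ the function $h_c$ is by construction $r$-harmonic for drift $-\kappa$, i.e.\ $\tfrac12 h_c''-\kappa h_c'-rh_c=0$, whence $(\mathcal L^\theta h_c)(x)=(\theta+\kappa)h_c'(x)$; from the explicit formula one checks $h_c'>0$ on $(c,\infty)$, and $\theta+\kappa\ge0$, so the expression is $\ge0$ and vanishes iff $\theta=-\kappa$. Symmetrically, for $x<c$ one has $\tfrac12 h_c''+\kappa h_c'-rh_c=0$, so $(\mathcal L^\theta h_c)(x)=(\theta-\kappa)h_c'(x)$ with $h_c'<0$ and $\theta-\kappa\le0$, again $\ge0$ with equality iff $\theta=\kappa$; the single point $x=c$ is irrelevant for the time integral. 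Thus $M$ is a local submartingale under every $P$ and a local martingale precisely under $P_c$, whose drift $-\kappa\,\sgn(X_t-c)$ is exactly the minimizing choice.

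For part (i) I take $\tau$ to be the exit time from the bounded interval $[a,b]$; on $[0,\tau]$ the stopped process $M_{t\wedge\tau}$ is bounded, so the local (sub/)martingale is a true one and optional stopping yields $\E_x^P(M_\tau)\ge h_c(x)$ and $\E_x^{P_c}(M_\tau)=h_c(x)$, using that a Brownian motion with bounded drift leaves a bounded interval in a.s.\ finite time. Part (iii) is the same computation for the limiting single exponentials $h_{\pm\infty}$, but with one-sided hitting times $\tau_a,\tau_b$: on $[0,\tau_a)$ one has $X_t>a$, so $h_\infty(X_t)=e^{\beta_1 X_t}\le e^{\beta_1 a}$ is bounded, and analogously $h_{-\infty}$ stays bounded below $b$. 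The new feature is that under the extremal drift the hitting time may be infinite, but the discount factor makes this harmless: $M_{t\wedge\tau_a}\to e^{-r\tau_a}h_\infty(X_{\tau_a})\mathds 1_{\{\tau_a<\infty\}}$, and bounded convergence delivers the stated inequality and equality. Finally, for part (ii) I would use $h_c\ge0$: under $P_c$ the process $M$ is a nonnegative local martingale, hence a supermartingale, and optional sampling together with Fatou's lemma gives $\E_x^{P_c}(M_\tau\mathds 1_{\{\tau<\infty\}})\le h_c(x)$ for an \emph{arbitrary} stopping time $\tau$. I expect the main obstacle to lie not in the sign computation but in these integrability justifications — ensuring the local (super/sub)martingales may be stopped — particularly the nonnegative-supermartingale argument in (ii) for unbounded $\tau$ and the role of the discount factor on $\{\tau=\infty\}$ in (iii).
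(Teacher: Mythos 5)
Your proposal is correct and follows essentially the same route as the paper: Girsanov plus It\^o applied to $e^{-rt}h_c(X_t)$, the pointwise sign computation showing the drift term $(\kappa\sgn(X_u-c)+\theta_u)h_c'(X_u)$ is nonnegative and vanishes exactly for $\theta=-\kappa\sgn(\cdot-c)$, then optional sampling for the bounded (sub)martingale in (i) and (iii) and the nonnegative-supermartingale optional sampling theorem in (ii). Your extra care about boundedness of $h_{\pm\infty}$ on the relevant half-lines and the behaviour on $\{\tau=\infty\}$ only makes explicit what the paper leaves implicit.
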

\begin{proof}
\begin{enumerate}[(i)]
\item For $P\in\mathcal{P}$ with density process $\theta$, by Girsanov's theorem, we may write
\[X_t=W_t^P+\int_0^t\theta_sds,\]
where $W^P$ is a Brownian motion under $P$. Since $h_c\in C^2$ we can apply It\^o's lemma and obtain
\[dh_c(X_t)=h_c'(X_t)dW^P_t+(h_c'(X_t)\theta_t+1/2h_c''(X_t))dt.\]
By construction of $h_c$, it holds that
\[1/2h_c''(X_t)-\kappa \sgn(X_t-c)h_c'(X_t)-rh_c(X_t)=0,\]
hence
\begin{align*}
e^{-rt}h_c(X_t)=&h_c(X_0)+\int_0^t e^{-ru}(\kappa\sgn(X_u-c)+\theta_u)h_c'(X_u)du\\
&+\int_0^t e^{-ru}h_c'(X_u)dW^P_u.
\end{align*}
Noting that $(\kappa\sgn(X_u-c)+\theta_u)\geq 0$ iff $h_c'(X_u)\geq 0$, we obtain that the process $(e^{-r(t\wedge \tau)}h_c(X_{t\wedge \tau}))_{t\geq0}$ is a bounded $P$-submartingale. Therefore, by the optional sampling theorem,
\[\E_x^P(e^{-r\tau}h_c(X_{\tau}))\geq \E_x^P(h_c(X_{0}))=h_c(x).\]
Under $P^c$ we see that $(e^{-r(t\wedge \tau)}h_c(X_{t\wedge \tau}))_{t\geq0}$ is actually a local martingale that is bounded. Therefore, the optional sampling theorem yields equality.
\item By the calculation in $(i)$ the process $(e^{-rt}h_c(X_t))_{t\geq0}$ is a positive local $P^c$-martingale, i.e. also a $P^c$-supermartingale. The optional sampling theorem for non-negative supermartingales is applicable.
\item By noting that $h_\infty$ is decreasing and $h_{-\infty}$ is increasing the same arguments as in $(i)$ apply.
\end{enumerate}
\end{proof}

The following theorem shows that the geometric solution described in Subsection \ref{sec:graph_sol} can indeed be generalized to the drift ambiguity case. Moreover, we give a characterization of the optimal stopping set as maximum point of explicitly given functions.
\begin{theorem}\label{thm:main_brownian}
\begin{enumerate}[(i)]
\item
 It holds that
\[v(x)=\inf\{\lambda h_{c}(x):c\in[-\infty,\infty],\lambda \in[0,\infty], \lambda h_{c}\geq g\}\mbox{~~~~~for all $x\in\R$}.\]
Furthermore, the infimum in $c$ is indeed a minimum.
\item
 A point $x\in\R$ is in the optimal stopping set $\{y:v(y)=g(y)\}$ if and only if there exists $c\in[-\infty,\infty]$ such that
\[x\in\argmax \frac{g}{h_c}.\]
\end{enumerate}
\end{theorem}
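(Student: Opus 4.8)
The plan is to establish part (i) first, from which part (ii) will follow, imitating the classical argument that identifies the value function with the smallest concave majorant. Write $w(x)$ for the right-hand side of the formula in (i). For a fixed $c\in[-\infty,\infty]$ the smallest admissible factor is $\lambda_c=\sup_y g(y)/h_c(y)$, so that $w(x)=\inf_c \lambda_c h_c(x)$ and, by positivity of the $h_c$, $w\ge g$ everywhere. For the upper bound $v\le w$ I would argue as in the concave case, but feeding in the worst-case measure supplied by the lemma: fixing an admissible pair $(\lambda,c)$ with $\lambda h_c\ge g$ and an arbitrary stopping time $\tau$, passing to $P_c$ and combining $g\le\lambda h_c$ with the supermartingale inequality of Lemma~\ref{lem:harmonic}(ii) gives
\[\inf_{P\in\mathcal{P}^\kappa}\E_x^P(e^{-r\tau}g(X_\tau)\mathds{1}_{\{\tau<\infty\}})\le \E_x^{P_c}(e^{-r\tau}g(X_\tau)\mathds{1}_{\{\tau<\infty\}})\le\lambda\,\E_x^{P_c}(e^{-r\tau}h_c(X_\tau)\mathds{1}_{\{\tau<\infty\}})\le\lambda h_c(x).\]
Taking the supremum over $\tau$ and then the infimum over admissible $(\lambda,c)$ yields $v(x)\le w(x)$.

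For the lower bound I would mimic the geometric solution by introducing the continuation set $C=\{w>g\}$ and the stopping set $S=\{w=g\}$, with candidate rule $\tau^\ast=\inf\{t\ge0:X_t\in S\}$. The decisive structural step is to show that on each connected component $(a,b)$ of the open set $C$ the majorant $w$ coincides with a single function $\lambda h_c$: heuristically $w$ is worst-case $r$-harmonic on $C$ (being a pointwise infimum of the worst-case $r$-harmonic functions $\lambda h_c$, and touching $g$ nowhere on $C$); since any positive solution $f$ of $\tfrac12 f''-\kappa|f'|-rf=0$ satisfies $f''=2(\kappa|f'|+rf)>0$ and is therefore convex with at most one minimum, such an $f$ must equal, up to the factor $\lambda\ge0$, one of the $h_c$ (with $c$ the location of the minimum, and $c=\pm\infty$ for the monotone solutions $h_{\pm\infty}$). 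Granting this, $\tau^\ast$ is just the exit time of $(a,b)$ when $X_0=x\in(a,b)$, the boundary values satisfy $w=g$ on $S$, and Lemma~\ref{lem:harmonic}(i) (resp.\ (iii) at $c=\pm\infty$) gives, for every $P\in\mathcal{P}^\kappa$,
\[\E_x^P(e^{-r\tau^\ast}g(X_{\tau^\ast})\mathds{1}_{\{\tau^\ast<\infty\}})=\lambda\,\E_x^P(e^{-r\tau^\ast}h_c(X_{\tau^\ast})\mathds{1}_{\{\tau^\ast<\infty\}})\ge\lambda h_c(x)=w(x).\]
Taking the infimum over $P$ and then the supremum over stopping times shows $v(x)\ge w(x)$, so that $v=w$; for $x\in S$ one notes directly $g(x)\le v(x)\le w(x)=g(x)$.

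Attainment of the infimum in $c$ and part (ii) I would settle last. Using $w(x)=\inf_c\lambda_c h_c(x)$, I would fix $x$ and argue that $c\mapsto\lambda_c h_c(x)$ is lower semicontinuous on the compact set $[-\infty,\infty]$, carefully matching the limits $c\to\pm\infty$ with $h_{\pm\infty}$, so that the infimum is a minimum. Finally, $x\in\argmax (g/h_c)$ means $g(x)/h_c(x)=\lambda_c$, i.e.\ $\lambda_c h_c(x)=g(x)$; hence the existence of such a $c$ is equivalent, via the attained minimum, to $w(x)=g(x)$, which by $v=w$ is equivalent to $v(x)=g(x)$. This proves (ii).

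The main obstacle is the decisive structural step in the lower bound: proving rigorously that $w$ is genuinely worst-case $r$-harmonic on every component of $C$, and that this harmonicity forces $w$ to agree with one single $\lambda h_c$ there. This is exactly where the non-convexity of the cone $\{\lambda h_c\}$, emphasised before the lemma, bites, so one cannot simply invoke the standard smallest-superharmonic-majorant machinery. I expect to need a local minimality/balayage argument (lowering $w$ on a subinterval where it is strictly superharmonic would contradict its being the pointwise infimum) to get harmonicity, a control on how often two functions $\lambda h_c$ and $\lambda' h_{c'}$ can cross to pin down a unique governing $c$, and separately the continuity and finiteness (non-emptiness of the majorant family) of $w$; the boundary cases $c=\pm\infty$ are precisely what the hitting-time statement of Lemma~\ref{lem:harmonic}(iii) is there to cover.
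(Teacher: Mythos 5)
Your upper bound $v\le w$ is essentially the paper's argument and is fine. The problem is the lower bound: you reduce it to the ``decisive structural step'' that on each component of the continuation set $C=\{w>g\}$ the infimum $w$ coincides with a single $\lambda h_c$, and you explicitly leave this unproven, merely listing the ingredients (a balayage/local-minimality argument, a crossing count for the family $\{\lambda h_c\}$, continuity of $w$) that you would need. That is precisely where the difficulty of the theorem lives --- as you yourself note, the family $\{\lambda h_c\}$ is not a convex cone, so the pointwise infimum of its members need not be excessive in the worst-case sense, and the standard smallest-superharmonic-majorant machinery is unavailable. A proof that defers exactly this step is not a proof. There is a second, more structural objection to the route itself: your verification argument needs $C$ to be open with boundary in $S=\{w=g\}$, needs $w=g$ to hold at the exit points of $\tau^\ast$, and needs the suprema $\sup g/h_c$ to be attained; none of this is available for the merely measurable $g$ the theorem allows (the paper explicitly stresses that no continuity of $g$ is assumed). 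So even if the structural step were filled in, the approach would only cover a restricted class of reward functions.

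The paper avoids all of this by arguing pointwise in $x$ rather than globally on $C$. It splits into three cases according to where the one-sided suprema $\sup_{y\le x} g/h_{\pm\infty}$ and $\sup_{y\ge x} g/h_{\pm\infty}$ are located. In the two boundary cases one takes a sequence $y_n$ on one side of $x$ along which $g/h_{\pm\infty}$ approaches its supremum and uses the one-sided hitting times of Lemma~\ref{lem:harmonic}(iii). In the remaining case one shows, via an intermediate value theorem applied to $c\mapsto\sup_{y\le x}(g/h_c)(y)-\sup_{y\ge x}(g/h_c)(y)$ (continuity coming from writing these suprema as infima of concave functions of $e^{(\alpha_2-\alpha_1)c}$ resp.\ $e^{(\beta_2-\beta_1)c}$), that there is a balancing $c^*$ with equal one-sided suprema; two-sided exit times from $[y_n,z_n]$ together with Lemma~\ref{lem:harmonic}(i) then give $v(x)\ge\sup(g/h_{c^*})\,h_{c^*}(x)$. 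This construction produces the minimizing $c$ explicitly (so attainment of the infimum and part (ii) fall out of the same case analysis, rather than requiring your separate lower-semicontinuity argument), works with approximating sequences so that no supremum need be attained, and never touches the topology of the continuation set. You should either carry out this intermediate-value construction or supply a complete proof of your structural claim; as it stands the lower bound, and with it parts (i) and (ii), is not established.
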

\begin{proof}
For each $x\in\R$, $c\in[-\infty,\infty]$ and each stopping time $\tau$ we obtain using Lemma \ref{lem:harmonic} (ii)
\begin{align*}
\inf_P\E_x^P(e^{-r\tau}g(X_\tau)\mathds{1}_{\{\tau<\infty\}})&=\inf_P\E_x^P\left(e^{-r\tau}h_c(X_\tau)\frac{g}{h_c}(X_\tau)\mathds{1}_{\{\tau<\infty\}}\right)\\
&\leq \sup\left(\frac{g}{h_c}\right)\inf_P\E_x^P(e^{-r\tau}h_c(X_\tau)\mathds{1}_{\{\tau<\infty\}})\\
&\leq \sup\left(\frac{g}{h_c}\right)\E_x^{P_c}(e^{-r\tau}h_c(X_\tau)\mathds{1}_{\{\tau<\infty\}})\\
&\leq \sup\left(\frac{g}{h_c}\right)h_c(x).
\end{align*}
Since $\lambda h_c\geq g$ holds if and only if $\lambda\geq \sup\left(\frac{g}{h_c}\right)$ we obtain that 
\[v(x)\leq \inf\{\lambda h_{c}(x):c\in[-\infty,\infty],\lambda \geq 0, \lambda h_{c}\geq g\}.\]
For the other inequality consider the following cases:\\
\textit{Case 1:}
\[\sup_{y\in\R}\frac{g(y)}{h_\infty(y)}=\sup_{y\leq x}\frac{g(y)}{h_\infty(y)}.\]
Take a sequence $(y_n)_{n\in\N}$ with $y_n\leq x$ such that ${g(y_n)}/{h_\infty(y_n)}\rightarrow\sup_{y\in\R}\frac{g(y)}{h_\infty(y)}$. Then for $\tau_n=\inf\{t\geq 0: X_t=y_n\}$ using Lemma \ref{lem:harmonic} (iii) we obtain
\begin{align*}
v(x)&\geq \inf_P\E_x^P(e^{-r\tau_n}g(X_{\tau_n})\mathds{1}_{\{\tau_n<\infty\}})\\
&=\inf_P\E_x^P(e^{-r\tau_n}h_\infty(X_{\tau_n})\frac{g}{h_\infty}(X_{\tau_n})\mathds{1}_{\{\tau_n<\infty\}})\\
&=\frac{g}{h_\infty}(y_n)E_x^{P_\infty}(e^{-r\tau_n}h_\infty(X_{\tau_n})\mathds{1}_{\{\tau_n<\infty\}})\\
&=\frac{g}{h_\infty}(y_n)h_\infty(x)\\
&\rightarrow \sup_{y\in\R}\frac{g(y)}{h_\infty(y)} h_\infty(x)~~~~~\mbox{for $n\rightarrow\infty$.} 
\end{align*}
Therefore, $v(x)\geq \inf\{\lambda h_{c}(x):c\in[-\infty,\infty],\lambda\in[0,\infty], \lambda h_{c}\geq g\}$.\\
Moreover, if $x$ is in the stopping set, i.e. $v(x)=g(x)$, then we see that $g(x)/h_\infty(x)=\sup_{y\in\R}\frac{g(y)}{h_\infty(y)}$, i.e. $x$ is a maximum point of the function ${g}/h_{\infty}$, i.e. $(ii)$.\\
\textit{Case 2:} The case $\sup_{y\in\R}{g(y)}/{h_{-\infty}(y)}=\sup_{y\geq x}{g(y)}/{h_{-\infty}(y)}$ can be handled the same way.\\
\textit{Case 3:}
\begin{align*}
\sup_{y\leq x}\frac{g(y)}{h_\infty(y)}>\sup_{y\geq x}\frac{g(y)}{h_\infty(y)}~~~\mbox{and}~~~\sup_{y\leq x}\frac{g(y)}{h_{-\infty}(y)}<\sup_{y\geq x}\frac{g(y)}{h_{-\infty}(y)}.
\end{align*}
First we show that there exists $c^*\in\R$ such that
\[\sup_{y\leq x}\frac{g(y)}{h_{c^*}(y)}=\sup_{y\geq x}\frac{g(y)}{h_{c^*}(y)}:\]
To this end, write
\begin{align*}
h_{c,1}(y)&=\frac{\alpha_2}{\alpha_2-\alpha_1}e^{\alpha_1(y-c)}-\frac{\alpha_1}{\alpha_2-\alpha_1}e^{\alpha_2(y-c)},\\
h_{c,2}(y)&=\frac{\beta_2}{\beta_2-\beta_1}e^{\beta_1(y-c)}-\frac{\beta_1}{\beta_2-\beta_1}e^{\beta_2(y-c)}.
\end{align*}
By construction of $h_c$ it holds that $h_c=\min(h_{c,1},h_{c,2})$. Therefore,
\begin{align*}
\sup_{y\leq x}\frac{g(y)}{h_c(y)}&=\left[\inf_{y\leq x}\left(\min\left(\frac{h_{c,1}(y)}{g(y)},\frac{h_{c,2}(y)}{g(y)}\right)\right)\right]^{-1}\\
=\bigg[\min\bigg\{&e^{-\alpha_2 c}\inf_{y\leq x}\left(\frac{\alpha_2}{\alpha_2-\alpha_1}\frac{e^{\alpha_1 y}}{g(y)}e^{(\alpha_2-\alpha_1)c}+\frac{\alpha_1}{\alpha_2-\alpha_1}\frac{e^{\alpha_2 y}}{g(y)}\right),\\
& e^{-\beta_2 c}\inf_{y\leq x}\left(\frac{\beta_2}{\beta_2-\beta_1}\frac{e^{\beta_1 y}}{g(y)}e^{(\beta_2-\beta_1)c}+\frac{\beta_1}{\beta_2-\beta_1}\frac{e^{\beta_2 y}}{g(y)}\right)\bigg\}\bigg]^{-1}
\end{align*}
Since the functions
\[z\mapsto \inf_{y\leq x}\left(\frac{\alpha_2}{\alpha_2-\alpha_1}\frac{e^{\alpha_1 y}}{g(y)}z+\frac{\alpha_1}{\alpha_2-\alpha_1}\frac{e^{\alpha_2 y}}{g(y)}\right)\]
and 
\[z\mapsto\inf_{y\leq x}\left(\frac{\beta_2}{\beta_2-\beta_1}\frac{e^{\beta_1 y}}{g(y)}z+\frac{\beta_1}{\beta_2-\beta_1}\frac{e^{\beta_2 y}}{g(y)}\right)\]
are continuous as concave functions, we obtain that the function $c\mapsto\sup_{y\leq x}\frac{g(y)}{h_c(y)}$ is continuous. By the same argument, the function $c\mapsto\sup_{y\geq x}\frac{g(y)}{h_c(y)}$ is also continuous. By the intermediate value theorem applied to the function
\[c\mapsto\sup_{y\leq x}\left(\frac{g(y)}{h_c(y)}\right)-\sup_{y\geq x}\left(\frac{g(y)}{h_c(y)}\right)\]
there exists $c^*$ with $\sup_{y\leq x}\frac{g(y)}{h_{c^*}(y)}=\sup_{y\geq x}\frac{g(y)}{h_{c^*}(y)}$ as desired.\\
Now take sequences $(y_n)_{n\in\N}$ and $(z_n)_{n\in\N}$ with $y_n\leq x\leq z_n$ such that 
\[\sup_{y\leq x}\frac{g(y)}{h_{c^*}(y)}=\lim_{n\rightarrow\infty}\frac{g(y_n)}{h_{c^*}(y_n)}=\lim_{n\rightarrow\infty}\frac{g(z_n)}{h_{c^*}(z_n)}=\sup_{y\geq x}\frac{g(y)}{h_{c^*}(y)}.\]
Using $\tau_n=\inf\{t\geq0:X_t\not\in[y_n,z_n]\}$ we obtain by Lemma \ref{lem:harmonic} (i)
\begin{align*}
v(x)&\geq \inf_P\E_x^P(e^{-r\tau_n}h_{c^*}(X_{\tau_n})\frac{g}{h_{c^*}}(X_{\tau_n})\mathds{1}_{\{\tau_n<\infty\}})\\
&\geq \left(\frac{g}{h_{c^*}}(y_n)\wedge \frac{g}{h_{c^*}}(z_n)\right)\inf_P\E_x^P(e^{-r\tau_n}h_{c^*}(X_{\tau_n})\mathds{1}_{\{\tau_n<\infty\}})\\
&=\left(\frac{g}{h_{c^*}}(y_n)\wedge \frac{g}{h_{c^*}}(z_n)\right)h_{c^*}(x)\rightarrow \sup\left(\frac{g}{h_{c^*}}\right)h_{c^*}(x).
\end{align*}
This yields the result $(i)$. As above we furthermore see that if $x$ is in the optimal stopping set, then it is a maximum point of $g/h_{c^*}$, i.e. $(ii)$.
\end{proof}
\begin{remark}
\begin{enumerate}
\item We would like to emphasize that we do not need any continuity assumptions on $g$. This is remarkable, because even for the easy case described at the beginning of this section most standard techniques do not lead to such a general result.
\item The previous proof is inspired by the ideas first described in \cite{BL}. It seems that other standard methods for dealing with optimal stopping problems for diffusions without drift ambiguity (such as Martin boundary theory as in \cite{sa}, generalized concavity methods as in \cite{dk}, or linear programming arguments as in \cite{HS}) are not applicable with minor modifications due to the nonlinear structure coming from drift ambiguity. A characterization of the optimal stopping points as in Theorem \ref{thm:main_brownian} (ii) for the problem without ambiguity can be found in \cite{CI2}.
\end{enumerate}
\end{remark}

\subsection{Worst-case prior}\label{subsec:worst_case}
Theorem \ref{thm:main_brownian} leads to the value of the optimal stopping problem with drift ambiguity and also provides an easy way to find the optimal stopping time. Another important topic is to determine the worst-case measure for a process started in a point $x$, i.e. we would like to determine the measure $P$ such that $v(x)=\sup_\tau \E^P_x(e^{-r\tau}g(X_\tau)\mathds{1}_{\{\tau<\infty\}})$. Using the results described above the worst-case measure can also be found immediately: 
\begin{theorem}\label{thm:worst_case_brown}
Let $x\in \R$ and let $c$ be a minimizer as in Theorem \ref{thm:main_brownian} (i). Then $P^c$ is a worst-case measure for the process started in $x$.
\end{theorem}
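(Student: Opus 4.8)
The plan is to establish the two inequalities that together give the defining equality of a worst-case measure, namely $v(x)=\sup_\tau\E_x^{P^c}(e^{-r\tau}g(X_\tau)\mathds{1}_{\{\tau<\infty\}})$.

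First I would dispose of the easy inequality $v(x)\le\sup_\tau\E_x^{P^c}(e^{-r\tau}g(X_\tau)\mathds{1}_{\{\tau<\infty\}})$. This is immediate from the minimax structure of the value function: for every stopping time $\tau$ the infimum over all $P\in\mathcal{P}^\kappa$ is dominated by the expectation under the single measure $P^c$, and taking the supremum over $\tau$ on both sides preserves the inequality. No use of the preceding results is needed here.

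The substantive step is the reverse inequality $\sup_\tau\E_x^{P^c}(e^{-r\tau}g(X_\tau)\mathds{1}_{\{\tau<\infty\}})\le v(x)$, and I would argue exactly along the lines of the opening computation in the proof of Theorem \ref{thm:main_brownian}. Since $c$ is a minimizer in part (i) of that theorem and $h_c$ is strictly positive, the value satisfies $v(x)=\sup(g/h_c)\,h_c(x)$. For an arbitrary stopping time $\tau$ I would factor $g=(g/h_c)\,h_c$, bound the ratio pointwise by $\sup(g/h_c)$, and then invoke Lemma \ref{lem:harmonic} (ii), which asserts $\E_x^{P^c}(e^{-r\tau}h_c(X_\tau)\mathds{1}_{\{\tau<\infty\}})\le h_c(x)$. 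This yields $\E_x^{P^c}(e^{-r\tau}g(X_\tau)\mathds{1}_{\{\tau<\infty\}})\le\sup(g/h_c)\,h_c(x)=v(x)$, a bound uniform in $\tau$, so the supremum over $\tau$ is at most $v(x)$.

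Combining the two inequalities gives the equality and establishes that $P^c$ is a worst-case measure. The argument is short, and the only points requiring care are bookkeeping ones: that the minimizer exists and delivers $v(x)=\sup(g/h_c)\,h_c(x)$ (both from Theorem \ref{thm:main_brownian} (i)), that $h_c>0$ so the factorization and the supremum bound are legitimate, and that the supermartingale inequality of Lemma \ref{lem:harmonic} (ii) applies to the measure denoted $P^c$ here. I do not expect any genuine obstacle, since the heavy lifting has already been carried out in Lemma \ref{lem:harmonic} (ii) and Theorem \ref{thm:main_brownian} (i).
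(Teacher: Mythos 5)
Your proposal is correct and is exactly the argument the paper has in mind: its proof of Theorem \ref{thm:worst_case_brown} simply says ``immediate from the proof of Theorem \ref{thm:main_brownian}'', and what you spell out---the trivial minimax inequality in one direction, and in the other direction the factorization $g=(g/h_c)h_c$, the bound by $\sup(g/h_c)$, Lemma \ref{lem:harmonic}~(ii) under $P^c$, and the identity $v(x)=\sup(g/h_c)\,h_c(x)$ for the minimizing $c$---is precisely the opening computation of that proof specialized to $P^c$. No genuine difference in approach; you have just made the paper's one-line proof explicit.
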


\begin{proof}
This is immediate from the proof of Theorem \ref{thm:main_brownian}.
\end{proof}


\subsection{Example: American straddle in the Bachelier market}
Because it is easy and instructive we consider the example discussed in \cite{CR} in the light of our method: \\
We consider a variant of the American straddle option in a Bachelier market model as follows: As a driving process we consider a standard Brownian motion under $P^0$ with reward function $g(x)=|x|$. Our aim is to find the value in 0 of the optimal stopping problem 
\[\sup_\tau \min_{P\in\mathcal{P}^\kappa}\E^P(e^{-r\tau}|X_\tau|\mathds{1}_{\{\tau<\infty\}}).\]
Using Theorem \ref{thm:main_brownian} we have to find the majorant of $|\cdot|$ in the set
\[\{\lambda h_{c}:c\in[-\infty,\infty],\lambda \in[0,\infty], \lambda h_{c}\geq g\}.\]
One immediately sees that if $\lambda h_c(\cdot)\geq |\cdot|$, then $\lambda h_0(\cdot)\geq |\cdot|$ and furthermore $\lambda h_0(0)\leq \lambda h_c(0)$. Therefore, we only have to consider majorants of $|\cdot|$ in the set
\[\{\lambda h_{0}:\lambda \in[0,\infty], \lambda h_{0}\geq g\}.\]
This one-dimensional problem can be solved immediately. For $\lambda=\max(|\cdot|/h_0(\cdot))$ one obtains $v(0)=\lambda h_0(0)$. \\
In fact, if $-b,b$ denote the maximum points of $|\cdot|/h_0(\cdot)$ we obtain that $v(x)=\lambda h_0(x)$ for $x\in[-b,b]$. Moreover, for $x\not\in[-b,b]$ one immediately sees that there exists $c\in\R$ such that $x$ is a maximum point of $|\cdot|/h_c(\cdot)$ and we obtain
\[v(x)=\begin{cases}
  \lambda h_0(x),  & \text{if}~~x\in[-b,b]\\
  |x|,  & \text{else}.
\end{cases}\]
Moreover, the worst-case measure is $P_0$, i.e. the process $X$ has positive drift $\kappa$ on $(-\infty,0)$ and drift $-\kappa$ on $[0,\infty)$.

\subsection{General diffusion processes}\label{subsec:general}
The results obtained before can be generalized to general one-dimensional diffusion processes. The only problem is to choose appropriate functions $h_c$ carefully. After these functions are constructed the same arguments as in the previous subsections work.\\
Let $(X_t)_{t\geq0}$ be a regular one-dimensional diffusion process on some interval $I$ with boundary points $a<b,a,b\in[-\infty,\infty]$, that is characterized by its generator
\[A=\frac{1}{2}\sigma^2(x)\frac{d^2}{dx^2}+\mu(x)\frac{d}{dx}\]
for some continuous functions $\sigma>0,\mu$. For convenience we furthermore assume that the boundary points $a, b$ of $I$ are natural. For a generalization to other boundary behaviors see the discussion in \cite[Section 6]{BL00}. Again denote by $\mathcal{P}^\kappa$ the set of all probability measures, that are equivalent to $Q$ with density process  of the form
\[\exp\left(\int_0^t\theta_sdX_s-1/2\int_0^t\theta^2_sds\right)\]
for a progressively measurable process $(\theta_t)_{t\geq0}$ with $|\theta_t|\leq \kappa$ for all $t\geq0$. We denote the fundamental solutions of the equation
\[\frac{1}{2}\sigma^2(x)\frac{d^2}{dx^2}\psi+(\mu(x)+\kappa)\frac{d}{dx}\psi=r\psi\]
by $\psi_+^\kappa$ resp. $\psi_-^\kappa$ for the increasing resp. decreasing positive solution, cf. \cite[II.10]{BS} for a discussion and further references. Analogously, denote the fundamental solutions of 
\[\frac{1}{2}\sigma^2(x)\frac{d^2}{dx^2}\psi+(\mu(x)-\kappa)\frac{d}{dx}\psi=r\psi\]
by $\psi_+^{-\kappa}$ resp. $\psi_-^{-\kappa}$.
Note that for each positive solution $\psi$ of one of the above ODEs it holds that 
\begin{equation}\label{eq:sec_der}
\frac{d^2}{dx^2}\psi(x)=\frac{-(\mu(x)\pm\kappa) \frac{d}{dx}\psi(x)+r\psi(x)}{\frac{1}{2}\sigma^2(x)},
\end{equation}
hence all extremal points are minima, so that $\psi$ has at most one minimum. Therefore, for each $s\in (0,1)$ the function $\psi=s\psi_+^{\pm\kappa}+(1-s)\psi_-^{\pm\kappa}$ has a unique minimum point and each $c\in I$ arises as such a minimum point. Therefore, for each $c\in (a,b)$ we can find constants $\gamma_1,...,\gamma_4$ such that the function
\[h_c:E\rightarrow\R,x\mapsto\begin{cases}
  \gamma_1\psi^\kappa_+(x)+\gamma_2\psi^{\kappa}_-(x),  & \text{if}~~x\leq c\\
  \gamma_3\psi^{-\kappa}_+(x)+\gamma_4\psi^{-\kappa}_-(x),  & \text{if}~~x>c
\end{cases}\]
is $C^1$ with a unique minimum point in $c$ with the standardization $h_c(c)=1$. More explicitly, $\gamma_1,...,\gamma_4$ are given by
\begin{equation}\label{eq:gamma}
\gamma_1=\frac{{\psi_-^{\kappa}}'(c)}{D^\kappa(c)},\; \gamma_2=\frac{-{\psi_+^{\kappa}}'(c)}{D^\kappa(c)},\;\gamma_3=\frac{{\psi_-^{-\kappa}}'(c)}{D^{-\kappa}(c)},\; \gamma_4=\frac{-{\psi_+^{-\kappa}}'(c)}{D^{-\kappa}(c)},
\end{equation}
where 
\[D^{\pm\kappa}(c)=\psi_+^{\pm\kappa}(c){\psi_-^{\pm\kappa}}'(c)-\psi_-^{\pm\kappa}(c){\psi_+^{\pm\kappa}}'(c).\]
Furthermore, write $h_a=\psi^{-\kappa}_+$ and $h_b=\psi^{\kappa}_-$. First, we show that the functions $h_c$ are always $C^2$.

\begin{lemma}
For each $c\in[a,b]$, the function $h_c$ is $C^2$. 
\end{lemma}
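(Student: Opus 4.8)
The plan is to reduce the claim to checking $C^2$-regularity at the single matching point $c$. Everywhere else $h_c$ coincides locally with a linear combination of the fundamental solutions $\psi^{\pm\kappa}_\pm$, and these are $C^2$ because they solve a second-order ODE with continuous coefficients — indeed their second derivative is given by the continuous expression \eqref{eq:sec_der}. The boundary cases $c=a$ and $c=b$ are then immediate, since there $h_c$ equals the single fundamental solution $\psi^{-\kappa}_+$ resp. $\psi^{\kappa}_-$, which is $C^2$ on all of $I$. So it remains to treat $c\in(a,b)$ and to verify that the left and right second derivatives of $h_c$ agree at $c$.

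First I would record the two facts coming out of the construction via the coefficients in \eqref{eq:gamma}: namely $h_c(c)=1$ and $h_c'(c)=0$. The value $h_c(c)=1$ is the imposed standardization, and the vanishing first derivative reflects that $c$ is the (unique) minimum point of $h_c$; both can be read off directly by substituting \eqref{eq:gamma} and using the definition of the Wronskian-type quantity $D^{\pm\kappa}(c)$. In particular $h_c$ is $C^1$, as already asserted in the construction.

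Next I would compute the one-sided second derivatives at $c$ using \eqref{eq:sec_der}. Approaching from the left, the relevant branch solves the ODE with drift $\mu+\kappa$, so
\[
h_c''(c^-)=\frac{-(\mu(c)+\kappa)\,h_c'(c)+r\,h_c(c)}{\tfrac12\sigma^2(c)},
\]
while from the right the branch solves the ODE with drift $\mu-\kappa$, giving
\[
h_c''(c^+)=\frac{-(\mu(c)-\kappa)\,h_c'(c)+r\,h_c(c)}{\tfrac12\sigma^2(c)}.
\]
The crucial observation — which is precisely why the two branches are glued at the common minimum — is that the discrepant drift term $\pm\kappa$ multiplies $h_c'(c)$, and this factor vanishes. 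Hence both one-sided values collapse to $r\,h_c(c)/(\tfrac12\sigma^2(c))=2r/\sigma^2(c)$, so they coincide, $h_c''(c^-)=h_c''(c^+)$, and $h_c$ is $C^2$ at $c$.

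I do not expect a genuine analytic obstacle here: the only point requiring thought is the realization that the difference between the two governing operators enters the second derivative solely through the first-derivative term, which is annihilated by the minimality condition $h_c'(c)=0$. This is the same mechanism that made the explicit Brownian-motion $h_c$ of Subsection \ref{subsec:BM} twice differentiable, now expressed intrinsically through \eqref{eq:sec_der} rather than through explicit exponentials.
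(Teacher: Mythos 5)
Your proposal is correct and follows essentially the same route as the paper: both arguments reduce the claim to matching the one-sided second derivatives at $c$, use \eqref{eq:sec_der} to express $h_c''(c\pm)$ in terms of $h_c'(c)$ and $h_c(c)$, and observe that the minimality condition $h_c'(c)=0$ together with $h_c(c)=1$ makes both sides equal to $2r/\sigma^2(c)$. Nothing is missing.
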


\begin{proof}
For $c\in\{a,b\}$ the claim obviously holds. Let $c\in(a,b)$. We only have to prove that $h_c''(c-)=h_c''(c+)$. Using equation \eqref{eq:sec_der}, we obtain that
\begin{align*}
h_c''(c-)&=\gamma_1{\psi^\kappa_+}''(c)+\gamma_2{\psi_-^{\kappa}}''(c)\\
&=\frac{-(\mu(c)+\kappa)}{\frac{1}{2}\sigma^2(c)}(\gamma_1{\psi_+^\kappa}'(c)+\gamma_2{\psi_-^\kappa}'(c))+\frac{r}{\frac{1}{2}\sigma^2(c)}(\gamma_1{\psi_+^\kappa}(c)+\gamma_2{\psi_-^\kappa}(c))\\
&=\frac{-(\mu(c)+\kappa)}{\frac{1}{2}\sigma^2(c)}h_c'(c)+\frac{r}{\frac{1}{2}\sigma^2(c)}h_c(c).
\end{align*}
By the choice of $\gamma_1,\gamma_2$, we obtain
\begin{align*}
h_c''(c-)&=\frac{r}{\frac{1}{2}\sigma^2(c)}
\end{align*}
and analogously 
\begin{align*}
h_c''(c+)&=\frac{r}{\frac{1}{2}\sigma^2(c)}.
\end{align*}
This proves the claim.
\end{proof}

Now all the arguments given in Subsection \ref{subsec:BM} and \ref{subsec:worst_case} apply and we again obtain the following results (compare Theorem \ref{thm:main_brownian} and Theorem \ref{thm:worst_case_brown}):
 
\begin{theorem}\label{thm:main_diffusion}
\begin{enumerate}[(i)]
\item
 It holds that
\[v(x)=\inf\{\lambda h_{c}(x):c\in [a,b],\lambda \in[0,\infty], \lambda h_{c}\geq g\}\mbox{~~~~~for all $x\in I$.}\]
\item
A point $x\in I$ is in the optimal stopping set $\{y:v(y)=g(y)\}$ if and only if there exists $c\in [a,b]$ such that
\[x\in\argmax \frac{g}{h_c}.\]
\end{enumerate}
\end{theorem}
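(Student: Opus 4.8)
The plan is to transcribe the proof of Theorem \ref{thm:main_brownian} line by line, replacing the explicit exponential functions $h_c$ used for Brownian motion by the general $h_c$ glued from the fundamental solutions $\psi_\pm^{\pm\kappa}$. The only genuinely new ingredient required is the diffusion analogue of Lemma \ref{lem:harmonic}; once this is in hand, parts (i) and (ii) follow by exactly the same manipulations as before.

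First I would re-establish the (sub/super)martingale properties. Since $h_c\in C^2$ (by the preceding lemma) and, by construction, solves on each side of $c$ the equation $\tfrac12\sigma^2 h_c'' + (\mu-\kappa\,\sgn(\cdot-c))h_c' - r h_c = 0$, It\^o's formula applied to $e^{-rt}h_c(X_t)$ under an arbitrary $P\in\mathcal{P}^\kappa$ produces the drift term
\[
e^{-rt}\bigl(\eta_t+\kappa\,\sgn(X_t-c)\bigr)h_c'(X_t),
\]
where $\eta_t$, with $|\eta_t|\leq\kappa$, is the extra drift carried by $X$ under $P$ relative to $Q$. Because $h_c$ decreases to the left of its unique minimum $c$ and increases to the right, the factor $\eta_t+\kappa\,\sgn(X_t-c)$ always carries the same sign as $h_c'(X_t)$, so this drift is nonnegative. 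Hence $(e^{-r(t\wedge\tau)}h_c(X_{t\wedge\tau}))_t$ is a bounded $P$-submartingale whenever $\tau$ is the exit time of a compact $[a',b']\subset I$, and it is a bounded local martingale --- so a true martingale --- under the measure $P_c$ given by $dX_t=(\mu(X_t)-\kappa\,\sgn(X_t-c))\,dt+\sigma(X_t)\,dW^c_t$, for which $\eta\equiv-\kappa\,\sgn(X_t-c)$. Optional sampling then yields the inequality and the $P_c$-equality of Lemma \ref{lem:harmonic}(i),(ii). For the endpoint functions $h_a=\psi_+^{-\kappa}$ and $h_b=\psi_-^{\kappa}$ I would argue via their monotonicity exactly as with $h_{\pm\infty}$ in Lemma \ref{lem:harmonic}(iii), using that $a,b$ are natural to control the one-sided hitting times.

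With this lemma the upper bound $v(x)\leq\inf\{\lambda h_c(x):\lambda h_c\geq g\}$ is verbatim: write $g(X_\tau)=h_c(X_\tau)\cdot(g/h_c)(X_\tau)$, bound $g/h_c$ by its supremum, apply the submartingale inequality together with the $P_c$-equality, and use that $\lambda h_c\geq g$ is equivalent to $\lambda\geq\sup(g/h_c)$. For the lower bound I would reproduce the three-case split. In Cases 1 and 2 the extremal ratio is approached on one side of $x$, and hitting a level $y_n$ there under the extreme measure $P_a$ resp.\ $P_b$ pushes $v(x)$ up to $\sup(g/h_a)\,h_a(x)$ resp.\ $\sup(g/h_b)\,h_b(x)$. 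In Case 3 I would produce, by the intermediate value theorem, a level $c^*\in(a,b)$ with $\sup_{y\leq x}g/h_{c^*}=\sup_{y\geq x}g/h_{c^*}$, and then exit $[y_n,z_n]$ under $P_{c^*}$ to conclude. Part (ii) drops out of the equality cases: $v(x)=g(x)$ forces $g(x)/h_c(x)$ to attain the global supremum of $g/h_c$, i.e.\ $x\in\argmax(g/h_c)$.

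The main obstacle is the continuity of $c\mapsto\sup_{y\leq x}g(y)/h_c(y)$ (and of its right-hand counterpart) needed to run the intermediate value theorem in Case 3. In the Brownian proof this was read off from the explicit exponential form, where each restricted infimum is visibly an infimum of affine --- hence concave and continuous --- functions of a transformed variable. That structure is unavailable for general $\psi_\pm^{\pm\kappa}$, so I would instead derive continuity from the joint continuity of $(c,y)\mapsto h_c(y)$ in $c$ --- the coefficients $\gamma_1,\dots,\gamma_4$ of \eqref{eq:gamma} depend continuously on $c$ through $\psi_\pm^{\pm\kappa}$ and their derivatives, and the gluing location moves continuously --- together with a local equicontinuity estimate letting the supremum over $y$ inherit continuity in $c$. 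Making this interchange rigorous, and checking that the two suprema straddle each other at $c=a$ and $c=b$ so that the hypotheses of Case 3 genuinely furnish a sign change, is where the real work lies; the remainder is a transcription of the Brownian argument.
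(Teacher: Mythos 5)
Your proposal is correct and follows essentially the same route as the paper, which for this theorem gives no separate argument beyond the preceding $C^2$ lemma and the assertion that the proofs of Theorem \ref{thm:main_brownian} and Theorem \ref{thm:worst_case_brown} carry over verbatim. Your explicit attention to the continuity of $c\mapsto\sup_{y\leq x}g(y)/h_c(y)$ --- the one step of the Brownian proof that genuinely relied on the explicit exponential form of $h_c$ --- addresses a point the paper leaves implicit, and your proposed fix via the continuous dependence of $\gamma_1,\dots,\gamma_4$ on $c$ (equivalently, of $h_c$ on $c$, locally uniformly in $y$) is the natural way to supply it.
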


\begin{theorem}\label{thm:worst-case_diff}
Let $x\in \R$ and let $c$ be a minimizer as in Theorem \ref{thm:main_diffusion} (i). Then $P^c$ is a worst-case measure for the process started in $x$.
\end{theorem}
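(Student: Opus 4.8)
The plan is to show that $P^c$ turns the minimax value into the value of an ordinary single-prior optimal stopping problem, i.e. that $P^c$ together with an optimal stopping time forms a saddle point of the game $\sup_\tau\inf_P$. Writing $\lambda^*=\sup(g/h_c)$, the fact that $c$ is a minimizer in Theorem \ref{thm:main_diffusion} (i) gives $v(x)=\lambda^* h_c(x)$, because for each fixed $c$ the smallest admissible multiplier is exactly $\sup(g/h_c)$. Hence it suffices to establish
\[
\sup_\tau \E^{P^c}_x\bigl(e^{-r\tau}g(X_\tau)\mathds{1}_{\{\tau<\infty\}}\bigr)=v(x),
\]
and I would prove the two inequalities separately.

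For the lower bound I would argue purely from $P^c\in\mathcal{P}^\kappa$: for every fixed stopping time $\tau$ one has $\E^{P^c}_x(e^{-r\tau}g(X_\tau)\mathds{1}_{\{\tau<\infty\}})\geq\inf_{P}\E^P_x(e^{-r\tau}g(X_\tau)\mathds{1}_{\{\tau<\infty\}})$, and taking the supremum over $\tau$ on both sides yields $\sup_\tau\E^{P^c}_x(\cdots)\geq\sup_\tau\inf_P\E^P_x(\cdots)=v(x)$. This direction requires nothing beyond admissibility of $P^c$.

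For the upper bound — the substantive step — I would invoke the general-diffusion analog of Lemma \ref{lem:harmonic} (ii). By construction $h_c$ is assembled from $\psi^{\kappa}_\pm$ on $\{x\le c\}$ and $\psi^{-\kappa}_\pm$ on $\{x> c\}$, so it is $r$-harmonic for the switching-drift dynamics $\mu(x)-\kappa\,\sgn(x-c)$, which is precisely the law $P^c$; consequently $(e^{-rt}h_c(X_t))_{t\ge 0}$ is a positive local $P^c$-martingale, hence a $P^c$-supermartingale. The optional sampling theorem for nonnegative supermartingales then gives $\E^{P^c}_x(e^{-r\tau}h_c(X_\tau)\mathds{1}_{\{\tau<\infty\}})\le h_c(x)$ for every stopping time $\tau$, and bounding $g\le\lambda^* h_c$ pointwise yields, uniformly in $\tau$,
\[
\E^{P^c}_x\bigl(e^{-r\tau}g(X_\tau)\mathds{1}_{\{\tau<\infty\}}\bigr)\le \lambda^*\,\E^{P^c}_x\bigl(e^{-r\tau}h_c(X_\tau)\mathds{1}_{\{\tau<\infty\}}\bigr)\le \lambda^* h_c(x)=v(x).
\]
Taking the supremum over $\tau$ closes the estimate, and combining the two bounds gives the claim.

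The main obstacle I anticipate is the structural verification underlying the upper bound: that $h_c$ is genuinely $r$-harmonic for the $P^c$-dynamics, so that $e^{-rt}h_c(X_t)$ really is a positive local martingale under $P^c$. This hinges on the $C^2$-gluing of $h_c$ at the point $c$ (established in the preceding lemma), which guarantees that Itô's formula can be applied across $c$ without a local-time correction term. Once this fact is in place — it is the exact counterpart of Lemma \ref{lem:harmonic} (ii), and the text records that these arguments transfer verbatim to the diffusion setting — the nonnegativity of $h_c$ lets optional sampling handle both the unboundedness of $h_c$ and the event $\{\tau=\infty\}$, and the remaining estimates are routine.
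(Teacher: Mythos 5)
Your proposal is correct and follows essentially the same route as the paper: the paper's proof of this theorem is simply the observation that it is immediate from the chain of inequalities in the proof of Theorem \ref{thm:main_brownian} (transferred to the diffusion setting), namely the upper bound $\E^{P^c}_x(e^{-r\tau}g(X_\tau)\mathds{1}_{\{\tau<\infty\}})\le \sup(g/h_c)\,h_c(x)=v(x)$ via the $P^c$-supermartingale property of $e^{-rt}h_c(X_t)$ from Lemma \ref{lem:harmonic} (ii), combined with the trivial lower bound from $P^c\in\mathcal{P}^\kappa$. You spell out the same two bounds, and your identification of the $C^2$-gluing of $h_c$ as the point that makes the local-martingale argument go through across $c$ matches exactly what the paper's preceding lemma is there to secure.
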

\subsection{Example: An optimal decision problem for Brownian motions with drift}

The following example illustrates that our method also works in the case of a discontinuous reward function $g$, where differential equation techniques cannot be applied immediately. Furthermore, we see that our approach can be used for all parameters in the parameter space, although the structure of the solution changes.\\
Let $X=\sigma W_t+\mu t$ denote a Brownian motion with drift $\mu\in(-\infty,\infty)$ and volatility $\sigma$ under $P^0$, and let 
\[g(x)=\begin{cases}
1,&\;\;x\leq 0,\\
x,&\;\;x>0.
\end{cases}\]
The fundamental solutions are given by
\[\psi_+^\kappa(x)=e^{\alpha_1x},\;\psi_-^\kappa(x)=e^{\alpha_2x},\;\psi_+^{-\kappa}(x)=e^{\beta_1x},\;\psi_-^{-\kappa}(x)=e^{\beta_2x},\]
where $\alpha_1<0<\alpha_2$ and  $\beta_1<0<\beta_2$ are the roots of
\[1/2\sigma^2 z^2+(\mu-\kappa) z-r=0,\mbox{ resp. }1/2\sigma^2 z^2+(\mu+\kappa) z-r=0.\]
Using equation \eqref{eq:gamma} we obtain
\[h_c(x)=\begin{cases}
  \frac{\alpha_2}{\alpha_2-\alpha_1}e^{\alpha_1(x-c)}-\frac{\alpha_1}{\alpha_2-\alpha_1}e^{\alpha_2(x-c)},  & \text{if}~~x> c,\\
  \frac{\beta_2}{\beta_2-\beta_1}e^{\beta_1(x-c)}-\frac{\beta_1}{\beta_2-\beta_1}e^{\beta_2(x-c)},  & \text{if}~~x\leq c.
\end{cases}\]
We consider
\[l_*(c):=\sup_{y\leq 0}\frac{1}{h_c(y)}=\begin{cases}
\frac{1}{h_c(0)},&\;c\geq0,\\
1,&\;c\leq 0
\end{cases}\]
and $l^*(c):=\sup_{y\geq 0}\frac{y}{h_c(y)}=\frac{y_c}{h_c(y_c)}$, where $y_c$ denotes the unique maximum point of $y/h_c(y),y\geq 0$.\\
We first consider the case\ $1=l_*(0)\geq l^*(0)$. By Theorem \ref{thm:main_diffusion} (ii), we obtain that $x=0$ is in the optimal stopping set $S$ as a maximizer of $y\mapsto g(y)/h_0(y)$. Furthermore, by decreasing $c$ to $-\infty$, we see that $(-\infty,0]\subseteq S$. Since $l_*(c)\rightarrow 0$ and $l^*(c)\rightarrow \infty$ for $c\rightarrow\infty$, there exists a unique $c^*\geq 0$ such that $l_*(c^*)=l^*(c^*)$. Therefore, by Theorem \ref{thm:main_diffusion} (ii) again, $x^*:=y_{c^*}\in S$ and by increasing $c$ to $\infty$, we obtain that $S=(-\infty,0]\cup [x^*,\infty)$. Theorem \ref{thm:main_diffusion} (i) yields
\[v(x)=\begin{cases}
1,&\;x\leq 0\\
\frac{h_{c^*}(x)}{h_{c^*}(0)},&\;x\in[0,x^*],\\
x&\;x\geq x^*,
\end{cases}\]
see Figure \ref{amb1} below.
By Theorem \ref{thm:worst-case_diff} we furthermore obtain that $P^{c^*}$ is a worst-case measure for the process started in $x\in(0,x^*)$. That is, under the worst-case measure, the process has drift $\mu+\kappa$ on $[0,c^*)$ and drift $\mu-\kappa$ on $[c^*,x^*]$.\\
Now, we consider the case $1=l_*(0)< l^*(0)$. By a similar reasoning as in the first case, we see that there exists $c^*<0$ such that $l_*(c^*)=l^*(c^*)$. Write $x_*=c^*<0,x^*=y_{c^*}$. Then, $S=(-\infty,x_*]\cup[x^*,\infty)$ is the optimal stopping set and the value function is given by
\[v(x)=\begin{cases}
1,&\;x\leq x^*\\
{h_{c^*}(x)},&\;x\in[x_*,x^*],\\
x&\;x\geq x^*,
\end{cases}\]
see Figure \ref{amb2} below. The worst-case measure is given by $P^{c^*}$, which means that the process has drift $\mu-\kappa$ on $[x_*,x^*]$.

\begin{figure}[h]
\begin{minipage}{0.5\textwidth}
\begin{center}
\includegraphics[width=4.5cm]{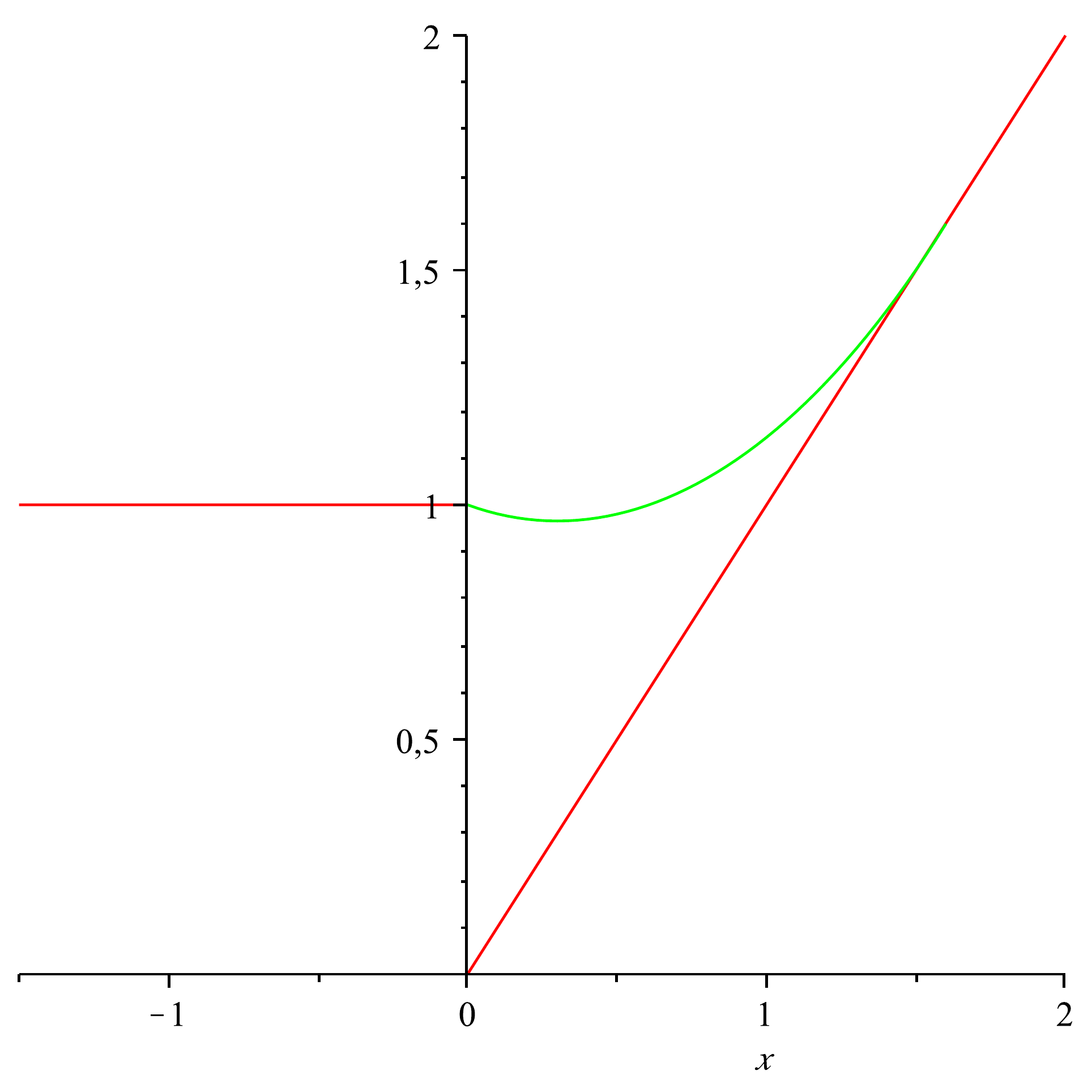}
\caption{Value function for $l_*(0)\geq l^*(0)$}\label{amb1}
\end{center}
\end{minipage}
\hfill
\begin{minipage}{0.5\textwidth}
\begin{center}
\includegraphics[width=4.5cm]{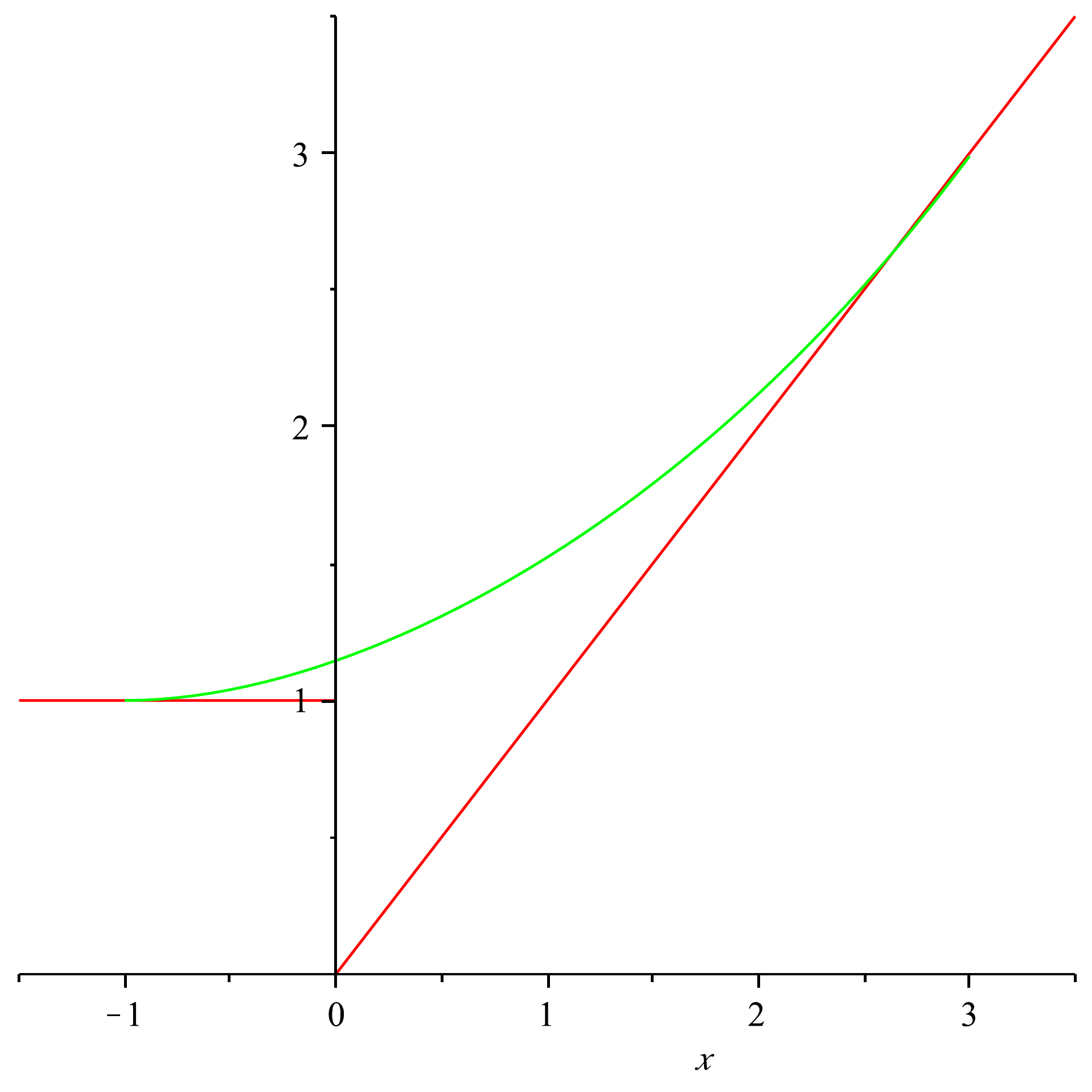}
\caption{Value function for $l_*(0)< l^*(0)$}\label{amb2}
\end{center}
\end{minipage}
\end{figure}

\section{Optimal decision for models with crashes} \label{sec:crashes}
Now denote by $Y$ a one-dimensional regular diffusion process on an interval $I$. Denote by $\mathcal{F}$ the natural filtration generated by $Y$. In this section we assume that all parameters of this process are known. This process represents the asset price process of the underlying asset if no crash occurs; therefore for economical plausibility it is reasonable to assume $I=(0,\infty)$.

Now we modify the process such that at a certain random time point $\sigma$ a crash of bounded height occurs.
To be more precise, let $c\in(0,1)$ be a given constant that describes an upper bound for the height of the crash.  For a given stopping time $\sigma$ and an $\mathcal{F}_\sigma$-measurable and $[c,1]$-valued random variable $\zeta$ we consider the modified process $X^{\sigma,\zeta}$ given by
\[X^{\sigma,\zeta}_t=\begin{cases}
 Y_t & t\leq\sigma \\
 \zeta Y_t &  t> \sigma.
\end{cases}\]
Now we consider the optimal stopping problem connected to the pricing of perpetual American options in this market, i.e., let $g:(0,\infty)\rightarrow[0,\infty)$ be a continuous reward function. We furthermore assume $g$ to be non-decreasing, so that a crash always leads to a lower payoff. We fix a constant discounting rate $r>0$ and furthermore assume that the holder of the option does know that the process will crash once in the future. We assume the crash to be observable for the decision maker, so she will specify her action by a pre-crash stopping time $\underline{\tau}$ and a post-crash stopping time $\overline{\tau}$, i.e. given $\sigma$ she takes the strategy
\begin{equation}\label{eq:strategy}
\tau=\tau_\sigma=\begin{cases}
 \underline{\tau}, &  \underline{\tau}\leq\sigma\\
\sigma+ \overline{\tau}\circ\theta_\sigma, & \mbox{ else},
\end{cases}
\end{equation}
where $\theta_\cdot$ denotes the time-shift operator.
As before we assume the holder of the option to be ambiguity averse in the sense that she maximizes her expected reward under the worst-case scenario, i.e. she tries to solve the problem
\begin{equation}\label{eq:value_crashes}
v(x)=\sup_{\underline{\tau},\overline{\tau}}\inf_{\sigma,\zeta}\E_x(e^{-r\tau}g(X^{\sigma,\zeta}_\tau)),
\end{equation}
where $\tau=\tau_\sigma$ is given as in \eqref{eq:strategy}.
\begin{remark}\label{rem:c}
Obviously by the monotonicity of the reward function we always have 
\[v(x)=\sup_{\underline{\tau},\overline{\tau}}\inf_{\sigma}\E_x(e^{-r\tau}g(X^{\sigma}_\tau))\]
where $X^{\sigma}:=X^{\sigma,c}$.
\end{remark}
We obtain the following reduction of the optimal stopping problem under ambiguity about the crashes: It shows that the problem can be reduced into one optimal stopping problem and one Dynkin game for the diffusion process $Y$ (without crashes).

\begin{theorem}\label{thm:reduction}
\begin{enumerate}[(i)]
\item Let $\hat{g}$ be the value function for the optimal stopping problem for $cY$ with reward $g$, i.e.
\begin{equation}\label{eq:stop}
\hat{g}(y)=\sup_{{\overline{\tau}}}\E_y(e^{-r{\overline{\tau}}}g(cY_{\overline{\tau}}))\mbox{~for all~}y\in(0,\infty)
\end{equation} 
and let $\hat{g}<\infty$. Then it holds that 
\begin{equation}\label{eq:game}
v(x)= \sup_{\underline{\tau}}\inf_\sigma\E_x(e^{-r\underline{\tau}}g(Y_{\underline{\tau}})\mathds{1}_{\{{\underline{\tau}}\leq \sigma\}}+e^{-r\sigma}\hat{g}(Y_\sigma)\mathds{1}_{\{{\underline{\tau}}>\sigma\}}).
\end{equation}
\item If $\overline{\tau}$ is optimal for \eqref{eq:stop} and $\underline{\tau},\sigma$ is a Nash-equilibrium for \eqref{eq:game}, then $(\un{\tau},\ov{\tau})$, $(\sigma,c)$ is a Nash-equilibrium for \eqref{eq:value_crashes}.
\end{enumerate}
\end{theorem}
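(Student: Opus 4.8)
The plan is to peel off the post-crash part of the problem by conditioning at the crash time $\sigma$ via the strong Markov property of $Y$, thereby identifying the remaining pre-crash optimization with the Dynkin game \eqref{eq:game}. Throughout I write $J(\underline\tau,\overline\tau,\sigma,\zeta)=\E_x(e^{-r\tau_\sigma}g(X^{\sigma,\zeta}_{\tau_\sigma}))$ for the objective in \eqref{eq:value_crashes} and, by Remark \ref{rem:c}, restrict the adversary's crash size to $\zeta\equiv c$, so that only $J(\cdot,\cdot,\cdot,c)$ matters for the value. I also abbreviate by $G(\underline\tau,\sigma)$ the expectation appearing in \eqref{eq:game}.

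The first step is a decomposition of $J(\underline\tau,\overline\tau,\sigma,c)$ along the $\mathcal F_\sigma$-measurable event $\{\underline\tau\le\sigma\}$ (recall that for two stopping times $\{\underline\tau\le\sigma\}\in\mathcal F_\sigma$). On $\{\underline\tau\le\sigma\}$ the process has not yet crashed, so $\tau_\sigma=\underline\tau$ and $X^{\sigma,c}_{\tau_\sigma}=Y_{\underline\tau}$, contributing $A(\underline\tau,\sigma):=\E_x(e^{-r\underline\tau}g(Y_{\underline\tau})\mathds{1}_{\{\underline\tau\le\sigma\}})$, a term not involving $\overline\tau$. On $\{\underline\tau>\sigma\}$ the strategy is $\sigma+\overline\tau\circ\theta_\sigma$ and the process has crashed; conditioning on $\mathcal F_\sigma$ and applying the strong Markov property yields $B(\underline\tau,\overline\tau,\sigma):=\E_x(\mathds{1}_{\{\underline\tau>\sigma\}}e^{-r\sigma}\E_{Y_\sigma}(e^{-r\overline\tau}g(cY_{\overline\tau})))$, so that $J(\underline\tau,\overline\tau,\sigma,c)=A(\underline\tau,\sigma)+B(\underline\tau,\overline\tau,\sigma)$.

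For part (i) it suffices to show $\sup_{\overline\tau}\inf_\sigma J(\underline\tau,\overline\tau,\sigma,c)=\inf_\sigma G(\underline\tau,\sigma)$ for each fixed $\underline\tau$, since taking $\sup_{\underline\tau}$ then gives the claim. The inequality ``$\le$'' follows from the minimax inequality $\sup_{\overline\tau}\inf_\sigma\le\inf_\sigma\sup_{\overline\tau}$ combined with the pointwise bound $\E_{Y_\sigma}(e^{-r\overline\tau}g(cY_{\overline\tau}))\le\hat g(Y_\sigma)$, which gives $\sup_{\overline\tau}J(\underline\tau,\overline\tau,\sigma,c)\le G(\underline\tau,\sigma)$ and hence $\inf_\sigma\sup_{\overline\tau}J\le\inf_\sigma G$. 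For ``$\ge$'' I would use that the post-crash problem \eqref{eq:stop} is a time-homogeneous optimal stopping problem for $Y$ with finite value $\hat g$, so by standard theory it admits an optimal rule $\overline\tau^{*}$ --- a first entry time into its stopping region --- that is optimal from every starting point at once. For this single $\overline\tau^{*}$ one has $B(\underline\tau,\overline\tau^{*},\sigma)=\E_x(\mathds{1}_{\{\underline\tau>\sigma\}}e^{-r\sigma}\hat g(Y_\sigma))$ for every $\sigma$ simultaneously, and inserting $\overline\tau^{*}$ into $\sup_{\overline\tau}\inf_\sigma$ produces the reverse bound. I expect this universal-optimizer step to be the main obstacle: it is exactly the point where one post-crash strategy must serve all crash times, and if an optimal $\overline\tau^{*}$ fails to exist it has to be replaced by near-optimal hitting times and a limiting argument.

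For part (ii) the decomposition gives, for the fixed optimal $\overline\tau$ and $\zeta=c$, the identity $J(\underline\tau',\overline\tau,\sigma',c)=G(\underline\tau',\sigma')$ for all $\underline\tau',\sigma'$, since $\E_{Y_{\sigma'}}(e^{-r\overline\tau}g(cY_{\overline\tau}))=\hat g(Y_{\sigma'})$. I then check the two saddle-point inequalities for \eqref{eq:value_crashes}. Against the adversary's $(\sigma,c)$, any decision-maker deviation $(\underline\tau',\overline\tau')$ satisfies $B(\underline\tau',\overline\tau',\sigma)\le\E_x(\mathds{1}_{\{\underline\tau'>\sigma\}}e^{-r\sigma}\hat g(Y_\sigma))$ by the definition of $\hat g$, hence $J(\underline\tau',\overline\tau',\sigma,c)\le G(\underline\tau',\sigma)\le G(\underline\tau,\sigma)=J(\underline\tau,\overline\tau,\sigma,c)$, the middle step being the maximizer property of the Dynkin equilibrium. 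Against the decision maker's $(\underline\tau,\overline\tau)$, any adversary deviation $(\sigma',\zeta')$ satisfies $J(\underline\tau,\overline\tau,\sigma',\zeta')\ge J(\underline\tau,\overline\tau,\sigma',c)$ by monotonicity of $g$ (as in Remark \ref{rem:c}), and $J(\underline\tau,\overline\tau,\sigma',c)=G(\underline\tau,\sigma')\ge G(\underline\tau,\sigma)=J(\underline\tau,\overline\tau,\sigma,c)$ by the minimizer property of the Dynkin equilibrium. These are precisely the Nash conditions for $((\underline\tau,\overline\tau),(\sigma,c))$, which completes the proof.
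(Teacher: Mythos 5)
Your proof is correct and follows essentially the same route as the paper: conditioning on $\mathcal{F}_\sigma$ and applying the strong Markov property to split off the post-crash part, bounding that part by $\hat g(Y_\sigma)$ for one inequality, and inserting a universally (near-)optimal post-crash rule for the other. The ``universal-optimizer'' obstacle you flag is handled in the paper exactly as you anticipate --- via a sequence of stopping times $\overline{\tau}_n$ that are $1/n$-optimal uniformly in the starting point --- and your explicit check of the saddle-point inequalities in (ii), including the reduction of the crash size to $\zeta=c$ by monotonicity, is a more detailed version of what the paper dismisses as obvious.
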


\begin{proof}
\begin{enumerate}[(i)]
\item First fix $\un{\tau}$, $\ov{\tau}$. Then for all $\sigma$ by conditioning on $\mathcal{F}_\sigma$ we obtain
\begin{align*}
\E_x(e^{-r{\tau}}g(X^{\sigma}_\tau))=\E_x(&e^{-r\un{\tau}}g(Y_{\un{\tau}})\mathds{1}_{\{\un{\tau}\leq \sigma\}}+e^{-r(\sigma+\ov{\tau}\circ\theta_\sigma)}g(cY_{\sigma+\ov{\tau}\circ\theta_\sigma})\mathds{1}_{\{\un{\tau}>\sigma\}})\\
=\E_x(&e^{-r\un{\tau}}g(Y_{\un{\tau}})\mathds{1}_{\{\un{\tau}\leq\sigma\}}\\
&+e^{-r\sigma}\E_x(e^{-r(\ov{\tau}\circ\theta_\sigma)}g(cY_{\sigma+\ov{\tau}\circ\theta_\sigma})|\mathcal{F}_\sigma)\mathds{1}_{\{\un{\tau}>\sigma\}}).
\end{align*}
By the strong Markov property we furthermore obtain 
\[\E_x(e^{-r(\ov{\tau}\circ\theta_\sigma)}g(cY_{\sigma+\ov{\tau}\circ\theta_\sigma})|\mathcal{F}_\sigma)=\E_{Y_\sigma}(e^{-r{\overline{\tau}}}g(cY_{\overline{\tau}}))\leq \hat{g}(Y_\sigma).\]
Therefore,
\begin{align*}
\E_x(e^{-r\tau}g(X^{\sigma}_\tau))\leq \E_x(e^{-r\un{\tau}}g(Y_{\un{\tau}})\mathds{1}_{\{\un{\tau}\leq\sigma\}}+e^{-r\sigma}\hat{g}(Y_\sigma)\mathds{1}_{\{\un{\tau}>\sigma\}}),
\end{align*}
showing that
\begin{equation*}
v(x)\leq \sup_{\underline{\tau}}\inf_\sigma\E_x(e^{-r\underline{\tau}}g(Y_{\underline{\tau}})\mathds{1}_{\{{\underline{\tau}}\leq \sigma\}}+e^{-r\sigma}\hat{g}(Y_\sigma)\mathds{1}_{\{{\underline{\tau}}>\sigma\}}).
\end{equation*}
Now take a sequence of $1/n$-optimal stopping times $(\ov{\tau}_n)_{n\in\N}$ for the problem \eqref{eq:stop}, i.e. 
\[\hat{g}(y)\leq \E_y(e^{-r\ov{\tau}_n}g(cY_{\ov{\tau}_n}))+\frac{1}{n}\mbox{~~for all~$n\in\N$ and all $y$}.\]
Then
\[\E_{Y_\sigma}(e^{-r{\overline{\tau}_n}}g(cY_{\overline{\tau}_n}))\leq \hat{g}(Y_\sigma)+\frac{1}{n},\]
and hence considering the post-crash strategy $\ov{\tau}_n$ and arbitrary $\un{\tau},\sigma$ we see that 
\begin{equation*}
v(x)+\frac{1}{n}\geq \sup_{\underline{\tau}}\inf_\sigma\E_x(e^{-r\underline{\tau}}g(Y_{\underline{\tau}})\mathds{1}_{\{{\underline{\tau}}\leq \sigma\}}+e^{-r\sigma}\hat{g}(Y_\sigma)\mathds{1}_{\{{\underline{\tau}}>\sigma\}}),
\end{equation*}
proving equality.
\item is obvious by the proof of (i).
\end{enumerate}
\end{proof}

\begin{remark}
Note that the arguments used so far have nothing to do with diffusion processes, but can be applied in the same way for general \textit{nice} one-dimensional strong Markov processes, like one-dimensional Hunt processes. Nonetheless we decided to consider this more special setup because of its special importance and since the theory for explicitly solving optimal stopping problems and Dynkin games is well established. 
\end{remark}

The previous reduction theorem solves the optimal stopping problem \eqref{eq:value_crashes} since both problems \eqref{eq:game} and \eqref{eq:stop} are well-studied for diffusion processes, see e.g. the references given above for optimal stopping problems and \cite{EV}, \cite{A}, and \cite{Pe} for Dynkin games. It is interesting to see that the optimal stopping problem under crash-scenarios naturally leads to Dynkin games, which were studied extensively in the last years. The financial applications studied so far were based on Israeli options, which are (at least at first glance) of a different nature, see \cite{Ki}.

\subsection{Example: Call-like problem with crashes}
As an example we consider a geometric Brownian motion given by the dynamics
\[dX_t=X_t(\mu dt +\sigma dW_t),~~t\geq 0\]
and we take $g:(0,\infty)\rightarrow[0,\infty)$ given by $g(x)=(x-K)^+$, where $K>0$ is a constant. To exclude trivial cases we assume that $\mu<r$. Then a closed-form solution of the optimal stopping problem 
\begin{equation*}
\hat{g}(y)=\sup_{\tau}\E_y(e^{-r\tau}(cY_\tau-K)^+)=\sup_{\tau}\E_{cy}(e^{-r\tau}(Y_\tau-K)^+)
\end{equation*} 
is well known (see e.g. \cite[Chapter VII]{PS}) and is given by
\begin{equation*}
\hat{g}(y)=\begin{cases}
  cy-K,  & cy \geq x^*,\\
  d(cy)^\gamma, & cy<x^*,
\end{cases}
\end{equation*}
where $\gamma$ is the positive solution to 
\[\frac{\sigma^2}{2} z^2+\big(\mu-\frac{\sigma^2}{2}\big)z+r=0,\]
and $x^*$ and $d$ are appropriate constants. Moreover, the optimal stopping time is given by $\ov{\tau}:=\inf\{t\geq 0: X_t\geq x^*\}$. By Theorem \ref{thm:reduction} we are faced with the Dynkin game
\begin{equation}\label{eq:crashes_call} 
v(x)=\sup_{\un{\tau}}\inf_{\sigma}\E_x(e^{-r\un{\tau}}(Y_\un{\tau}-K)^+\mathds{1}_{\{\un{\tau}\leq \sigma\}}+e^{-r\sigma}\hat{g}(Y_\sigma)\mathds{1}_{\{\un{\tau}>\sigma\}}).
\end{equation}
To solve this problem first note that there exists $x'\in(K,x^*/c)$ such that $g(x)\leq \hat{g}(x)$ for $x\in(0,x']$ and $g(x)\geq \hat{g}(x)$ for $x\in[x',\infty)$; indeed, $x'$ is the unique positive solution to 
\[d(cx)^\gamma=x-K,\]
see Figure \ref{fig:call_crashes}.

\begin{figure}%
\begin{center}
\includegraphics[width=0.5\columnwidth]{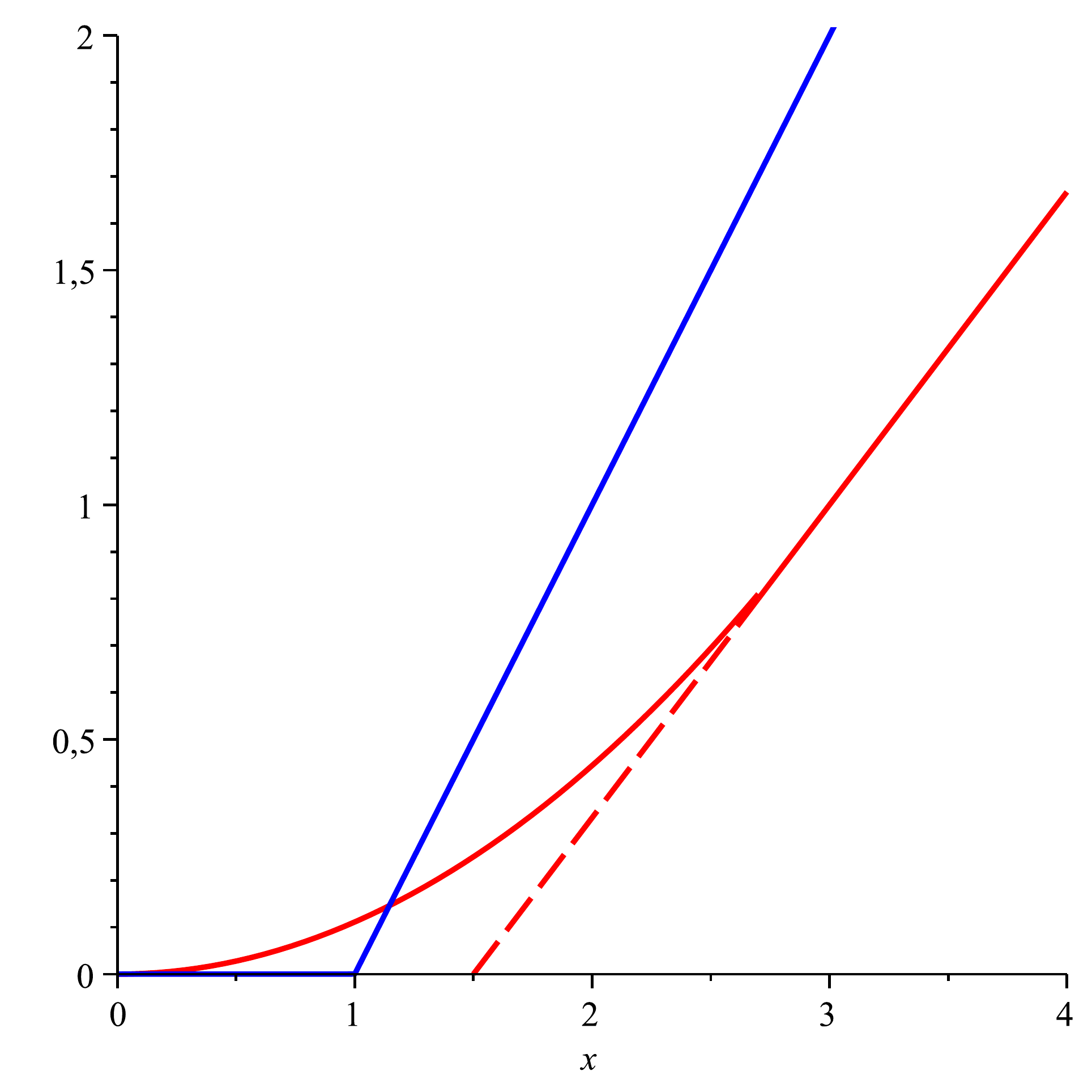}%
\caption{Graphs of $g$ (blue) and $\hat{g}$ (red).}\label{fig:call_crashes}
\end{center}
\end{figure}

We could use the general theory to solve the optimal stopping game \eqref{eq:crashes_call}, but we can also solve it elementary here:\\
First let $x>x'$. Then for all stopping times $\sigma^*$ with $\sigma^*=0$ under $P(\cdot | Y_0=x)$ and each stopping time $\un{\tau}$ we obtain 
\begin{align*}
&\E_x(e^{-r\un{\tau}}(Y_\un{\tau}-K)^+\mathds{1}_{\{\un{\tau}\leq \sigma^*\}}+e^{-r\sigma^*}\hat{g}(Y_{\sigma^*})\mathds{1}_{\{\un{\tau}>\sigma^*\}})\\
=&\E_x(g(x)\mathds{1}_{\{\un{\tau}=0\}}+\hat{g}(x)\mathds{1}_{\{\un{\tau}>0\}})\\
\leq& g(x)
\end{align*}
with equality if $\tau=0$ $P(\cdot | Y_0=x)$-a.s. On the other hand for $\tau^*=0$ the payoff is $g(x)$, independent of $\sigma$. \\

For $x\leq x'$ by by taking $\un{\tau}= \inf\{t\geq 0: X_t\geq x'\}$ we have for each stopping time $\sigma$ by definition of $x'$
\begin{align*}
&\E_x(e^{-r\un{\tau}}(Y_{\un{\tau}}-K)^+\mathds{1}_{\{\un{\tau}\leq \sigma\}}+e^{-r\sigma}\hat{g}(Y_\sigma)\mathds{1}_{\{\un{\tau}>\sigma\}})\\
=&\E_x(e^{-r\un{\tau}}(x'-K)\mathds{1}_{\{\un{\tau}\leq \sigma\}}+e^{-r\sigma}d(cY_\sigma)^\gamma\mathds{1}_{\{\un{\tau}>\sigma\}})\\
=&\E_x(e^{-r\un{\tau}}d(cx')^\gamma\mathds{1}_{\{\un{\tau}\leq \sigma\}}+e^{-r\sigma}d(cY_\sigma)^\gamma\mathds{1}_{\{\un{\tau}>\sigma\}})\\
=&\E_x(e^{-r(\sigma\wedge\un{\tau})}d(cY_{\sigma\wedge\un{\tau}})^\gamma)\\
=&dc^\gamma\E_x(e^{-r(\sigma\wedge\un{\tau})}(Y_{\sigma\wedge\un{\tau}})^\gamma)\\
=&dc^\gamma x^\gamma=\hat{g}(x),
\end{align*}
where the last equality holds by the fundamental properties of the minimal $r$-harmonic functions, see e.g. \cite[II.9]{BS}. By taking any stopping time $\sigma^*\leq \inf\{t\geq0:y_t\geq x'\}$ and any stopping time $\tau$ the same calculation holds.\\
Putting pieces together we obtain that $\un{\tau},\sigma^*$ is a Nash-equilibrium of the Dynkin game \eqref{eq:crashes_call} for any stopping time $\sigma^*\leq \un{\tau}$.\\
By applying Theorem \ref{thm:reduction} we get 
\begin{proposition}
The value function $v$ is given by
\begin{equation*}
v(x)=\begin{cases}
  x-K,  & x\geq x',\\
  d(cx)^\gamma, & x<x',
\end{cases}
\end{equation*}
and for 
\[\un{\tau}=\inf\{t\geq 0: X_t\geq x'\}\hspace{2cm} \mbox{'pre-crash strategy'}\]
and 
\[\ov{\tau}=\inf\{t\geq 0: X_t\geq x^*\}\hspace{2cm} \mbox{'post-crash strategy'}\]
and any stopping time $\sigma^*\leq \tau$ it holds that $(\un{\tau},\ov{\tau})$, $(\sigma^*,c)$ is a Nash-equilibrium of the problem. 
\end{proposition}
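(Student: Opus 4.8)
The plan is to harvest the statement from the Dynkin-game computation performed just above it and then to transport the resulting equilibrium to the original crash problem by means of Theorem \ref{thm:reduction}. The starting point is that the reduction theorem rewrites the value \eqref{eq:value_crashes} as the value of the game \eqref{eq:crashes_call}, in which the maximizer stops to collect $g(Y_{\un\tau})=(Y_{\un\tau}-K)^+$ while the minimizer stops to hand over the post-crash value $\hat g(Y_\sigma)$. The geometry of this game is controlled by the crossing level $x'\in(K,x^*/c)$, the unique positive root of $d(cx)^\gamma=x-K$, which splits the state space into $\{g\le\hat g\}=(0,x']$ and $\{g\ge\hat g\}=[x',\infty)$; I would take the existence and these monotonicity properties of $x'$ as already established before the statement.

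I would then verify directly that $\un\tau=\inf\{t\ge0:X_t\ge x'\}$ together with any $\sigma^*\le\un\tau$ is a saddle point of \eqref{eq:crashes_call}, splitting according to the two regions. For $x\ge x'$ one has $\un\tau=0$: stopping at once yields $g(x)=x-K$, the minimizer cannot alter a payoff already realised at time $0$, and since $\hat g(x)\le g(x)$ here no deviation $\un\tau>0$ of the maximizer can improve on $g(x)$ when tested against $\sigma^*=0$. This already shows $v(x)=x-K$ on $[x',\infty)$.

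For $x\le x'$ the key observation is that, along the stochastic interval $[0,\un\tau]$, the process never leaves $(0,x']$, so that both competing rewards reduce to $d(cY_\cdot)^\gamma$ once one uses the defining relation $g(x')=x'-K=d(cx')^\gamma$ together with $\hat g=d(c\,\cdot)^\gamma$ on $(0,x^*/c)$. Because $y\mapsto y^\gamma$ is the increasing minimal $r$-harmonic function of $Y$ (cf.\ \cite[II.9]{BS}), the stopped process $e^{-r(t\wedge\un\tau)}Y_{t\wedge\un\tau}^\gamma$ is a bounded martingale, and optional sampling collapses the game payoff to $dc^\gamma x^\gamma=d(cx)^\gamma=\hat g(x)$: for every competing $\sigma$ when $\un\tau$ is held fixed, and (using $g\le\hat g$ on $(0,x']$) for every alternative maximizer rule once a fixed $\sigma^*\le\un\tau$ is prescribed. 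The two saddle inequalities follow, fixing the game value at $\hat g(x)=d(cx)^\gamma$ there. To finish, I would note that $\ov\tau=\inf\{t\ge0:X_t\ge x^*\}$ is the classical optimizer of the call problem \eqref{eq:stop} (see \cite[Chapter VII]{PS}), so Theorem \ref{thm:reduction}(i) delivers the claimed formula for $v$ and Theorem \ref{thm:reduction}(ii) promotes $(\un\tau,\sigma^*)$ and $(\ov\tau,c)$ to a Nash equilibrium of \eqref{eq:value_crashes}.

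I expect the only real work to lie in the saddle-point verification for $x\le x'$, and within it in the optional-sampling step: one must confirm that $e^{-r(t\wedge\un\tau)}Y_{t\wedge\un\tau}^\gamma$ is uniformly integrable (here it is bounded, since $Y$ is confined to $(0,x']$ up to $\un\tau$ and $\gamma>0$) so that $\E_x(e^{-r(\sigma\wedge\un\tau)}Y_{\sigma\wedge\un\tau}^\gamma)=x^\gamma$ for arbitrary $\sigma$. Keeping the direction of each inequality straight---using $g(Y_{\un\tau})=x'-K=d(cx')^\gamma$ for the maximizer's side and $g\le\hat g$ on $(0,x']$ for the minimizer's side, and checking that prescribing a fixed $\sigma^*\le\un\tau$ indeed forces the maximizer's value down to $\hat g(x)$ rather than leaving it free---is the delicate bookkeeping, but none of it is computationally heavy.
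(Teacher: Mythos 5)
Your proposal is correct and follows essentially the same route as the paper: the same reduction via Theorem \ref{thm:reduction}, the same case split at the crossing point $x'$, and the same optional-sampling argument using the minimal $r$-harmonic function $y\mapsto y^\gamma$ to pin the game value at $\hat g(x)=d(cx)^\gamma$ below $x'$. The only difference is that you spell out the boundedness of the stopped process $e^{-r(t\wedge\un{\tau})}Y_{t\wedge\un{\tau}}^\gamma$ explicitly, where the paper simply cites the fundamental properties of minimal $r$-harmonic functions.
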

The solution to this example is very natural: If the investor expects a crash in the market, then she exercises the option as soon as the asset price reaches the level $x'$ (pre-crash strategy). After the crash, i.e. if the investor does not expect to have more crashes, then she takes the ordinary stopping time, i.e. she stops if the process reaches level $x^*>x'$ (post-crash strategy).

\section*{Acknowledgements}
This paper was partly written during a stay at \AA bo Akademi in the project \textit{Applied Markov processes -- fluid queues, optimal stopping and population dynamics, Project number: 127719  (Academy of Finland)}.  I would like to express my gratitude for the hospitality and support.

\bibliographystyle{alpha}
\bibliography{ambiguity}

\begin{thebibliography}{KW02}

\bibitem[Alv07]{A07}
Luis H.~R. Alvarez.
\newblock Knightian uncertainty, $\kappa$-ignorance, and optimal timing.
\newblock Technical Report Discussion Paper No. 25, Aboa Centre for Economics,
  2007.

\bibitem[Alv08]{A}
Luis H.~R. Alvarez.
\newblock A class of solvable stopping games.
\newblock {\em Appl. Math. Optim.}, 58(3):291--314, 2008.

\bibitem[BL97]{BL}
Martin Beibel and Hans~Rudolf Lerche.
\newblock A new look at optimal stopping problems related to mathematical
  finance.
\newblock {\em Statist. Sinica}, 7(1):93--108, 1997.

\bibitem[BL00]{BL00}
Martin Beibel and Hans~Rudolf Lerche.
\newblock A note on optimal stopping of regular diffusions under random
  discounting.
\newblock {\em Teor. Veroyatnost. i Primenen.}, 45(4):657--669, 2000.

\bibitem[BS02]{BS}
Andrei~N. Borodin and Paavo Salminen.
\newblock {\em Handbook of {B}rownian motion---facts and formulae}.
\newblock Probability and its Applications. Birkh\"auser Verlag, Basel, second
  edition, 2002.

\bibitem[Chr12]{C}
S\"oren Christensen.
\newblock A method for pricing american options using semi-infinite linear
  programming.
\newblock {\em Mathematical Finance}, 2012.
\newblock to appear.

\bibitem[CI11]{CI2}
S\"oren Christensen and Albrecht Irle.
\newblock A harmonic function technique for the optimal stopping of diffusions.
\newblock {\em Stochastics An International Journal of Probability and
  Stochastic Processes}, 83(4-6):347--363, 2011.

\bibitem[CR09]{CR2}
Tatjana Chudjakow and Frank Riedel.
\newblock The best choice problem under ambiguity.
\newblock Technical Report 413, IWM Bielefeld, February 2009.

\bibitem[CR10]{CR}
Xue Cheng and Frank Riedel.
\newblock Optimal stopping under ambiguity in continuous time.
\newblock Technical Report 429, IWM Bielefeld, 2010.

\bibitem[DE92]{DE}
Darrell Duffie and Larry~G. Epstein.
\newblock Asset pricing with stochastic differential utility.
\newblock {\em The Review of Financial Studies}, 5(3):pp. 411--436, 1992.

\bibitem[DK03]{dk}
Savas Dayanik and Ioannis Karatzas.
\newblock On the optimal stopping problem for one-dimensional diffusions.
\newblock {\em Stochastic Process. Appl.}, 107(2):173--212, 2003.

\bibitem[DY69]{DyY}
Evgenii~B. Dynkin and Aleksandr~A. Yushkevich.
\newblock {\em Markov processes: {T}heorems and problems}.
\newblock Translated from the Russian by James S. Wood. Plenum Press, New York,
  1969.

\bibitem[ES03]{ES}
Larry~G. Epstein and Martin Schneider.
\newblock Recursive multiple-priors.
\newblock {\em Journal of Economic Theory}, 113(1):1--31, November 2003.

\bibitem[EV06]{EV}
Erik Ekstr{\"o}m and Stephane Villeneuve.
\newblock On the value of optimal stopping games.
\newblock {\em Ann. Appl. Probab.}, 16(3):1576--1596, 2006.

\bibitem[HS10]{HS}
Kurt Helmes and Richard~H. Stockbridge.
\newblock Construction of the value function and optimal rules in optimal
  stopping of one-dimensional diffusions.
\newblock {\em Adv. in Appl. Probab.}, 42(1):158--182, 2010.

\bibitem[Kif00]{Ki}
Yuri Kifer.
\newblock Game options.
\newblock {\em Finance Stoch.}, 4(4):443--463, 2000.

\bibitem[KS09]{KS}
Ralf Korn and Frank~Thomas Seifried.
\newblock {A worst-case approach to continuous-time portfolio optimisation.}
\newblock {Albrecher, Hansj{\"o}rg (ed.) et al., Advanced financial modelling.
  Berlin: Walter de Gruyter. Radon Series on Computational and Applied
  Mathematics 8, 327-345.}, 2009.

\bibitem[KW02]{KW}
Ralf Korn and Paul Wilmott.
\newblock Optimal portfolios under the threat of a crash.
\newblock {\em Int. J. Theor. Appl. Finance}, 5(2):171--187, 2002.

\bibitem[Pe{\v{s}}11]{Pe}
Goran Pe{\v{s}}kir.
\newblock A duality principle for the legendre transform.
\newblock {\em J. Convex Anal.}, 2011.
\newblock to appear.

\bibitem[PS06]{PS}
Goran Pe{\v{s}}kir and Albert~N. Shiryaev.
\newblock {\em Optimal stopping and free-boundary problems}.
\newblock Lectures in Mathematics ETH Z\"urich. Birkh\"auser Verlag, Basel,
  2006.

\bibitem[Rie09]{R}
Frank Riedel.
\newblock Optimal stopping with multiple priors.
\newblock {\em Econometrica}, 77(3):857--908, 2009.

\bibitem[Sal85]{sa}
Paavo Salminen.
\newblock Optimal stopping of one-dimensional diffusions.
\newblock {\em Math. Nachr.}, 124:85--101, 1985.

\end{thebibliography}

\end{document}